\def\spaa{0} 
\newtheorem{theorem}{Theorem}
\newtheorem{rem}[theorem]{Remark}
\newtheorem{lemma}[theorem]{Lemma}
\newtheorem{claim}[theorem]{Claim}
\newtheorem{proposition}[theorem]{Proposition}
\newtheorem{definition}[theorem]{Definition}
\newcommand{\namedref}[2]{\hyperref[#2]{#1~\ref*{#2}}}
\newcommand{\sectionref}[1]{\namedref{Section}{#1}}
\newcommand{\theoremref}[1]{\namedref{Theorem}{#1}}
\newcommand{\defref}[1]{\namedref{Definition}{#1}}
\newcommand{\figureref}[1]{\namedref{Figure}{#1}}
\newcommand{\figref}[1]{\namedref{Figure}{#1}}
\newcommand{\claimref}[1]{\namedref{Claim}{#1}}
\newcommand{\lemmaref}[1]{\namedref{Lemma}{#1}}
\newcommand{\tableref}[1]{\namedref{Table}{#1}}
\newcommand{\propref}[1]{\namedref{Proposition}{#1}}
\newcommand{\Algref}[1]{\namedref{Algorithm}{#1}}
\newcommand{\lineref}[1]{\namedref{Line}{#1}}
\newcommand{\Lineref}[1]{\namedref{Line}{#1}}
\newcommand{\equalityref}[1]{\hyperref[#1]{Equality~\eqref{#1}}}
\newcommand{\inequalityref}[1]{\hyperref[#1]{Inequality~\eqref{#1}}}
\newcommand{\hlc}[2][yellow]{ {\sethlcolor{#1} \hl{#2}} }
\newcommand{\moti}[1]{\hlc[green]{\textbf{MM}: #1}}
\algnewcommand\algorithmicswitch{\textbf{switch}}
\algnewcommand\algorithmiccase{\textbf{case}}
\algnewcommand\algorithmicassert{\texttt{assert}}
\algnewcommand\Assert[1]{\State \algorithmicassert(#1)}%
\newtheorem{defn}[theorem]{Definition}
\newcommand{\INDSTATE}[1][1]{\State\quad}
\newcommand{\route}{\text{\sc{ipp}}\xspace}
\newcommand{\IPP}{\route}
\newcommand{\routenear}{\text{\sc{route-near}}\xspace}
\newcommand{\initroute}{\text{\sc{initial-route}}\xspace}
\newcommand{\detailedroute}{\text{\sc{detailed-route}}}
\newcommand{\opt}{\text{\textsc{opt}}}
\newcommand{\pmax}{p_{\max}}
\newcommand{\alg}{\text{\sc{alg}}}
\newenvironment{proof sketch}[1]{\noindent {\emph{Proof sketch of #1:}}}{\hfill \qed}
\newcommand{\Set}[1]{\left\{#1\right\}}
\def\DEF{\triangleq}
\newcommand{\row}{\text{\emph{row}}\xspace}
\newcommand{\far}{\text{\emph{Far}}\xspace}
\newcommand{\sw}{\text{\sc{sw}\xspace}\xspace}
\newcommand{\Gst}{G^{st}}
\newcommand{\near}{\text{\emph{Near}}\xspace}
\newcommand{\algn}{\alg({\near})\xspace}
\newcommand{\algcon}{\alg}
\newcommand{\Hl}{\ell_h}
\newcommand{\vl}{\ell_v}
\newcommand{\ppp}[2]{\text{\rm req}(#1\!\!\rightarrow\!\!#2)}
\title{\textbf{
Better Online Deterministic Packet Routing on Grids}}
\title{
Better Online Deterministic Packet Routing on Grids}
 \author{Guy Even \thanks{School of Electrical Engineering, Tel Aviv Univ., Tel-Aviv 69978, Israel. ({\tt guy@eng.tau.ac.il}).}
 \and
 Moti Medina
 \thanks{LIAFA, Universit\'{e} Paris Diderot, 75205 Paris Cedex 13, France. ({\tt moti.medina@liafa.univ-paris-diderot.fr}).
 }
 \and
 Boaz Patt-Shamir\thanks{School of Electrical Engineering, Tel Aviv Univ., Tel-Aviv 69978, Israel. ({\tt boaz@tau.ac.il}).}
 ~\protect\footnote{Supported in part by ISF, MoST and Neptune.}
 }
\author{
\alignauthor Guy Even\\
       \affaddr{School of Electrical Engineering}\\
       \affaddr{Tel Aviv University}\\
       \affaddr{Tel Aviv 6997801, Israel}\\
       \email{guy@eng.tau.ac.il}
\alignauthor Moti Medina\\
       \affaddr{LIAFA}\\
       \affaddr{Universit\'{e} Paris Diderot}\\
       \affaddr{75205 Paris Cedex 13, France}\\
       \email{moti.medina@liafa.univ-paris-diderot.fr}
\alignauthor Boaz Patt-Shamir\thanks{Supported in part by ISF, MoST and
    Neptune.}\\
       \affaddr{School of Electrical Engineering}\\
       \affaddr{Tel Aviv University}\\
       \affaddr{Tel Aviv 6997801, Israel}\\
       \email{boaz@tau.ac.il}
}
\begin{document} 

\if\spaa=0
\def\thepage{}
\begin{titlepage}
\fi

 \maketitle
\begin{abstract}
  We consider the following fundamental routing problem.
  An adversary inputs packets arbitrarily at sources,
  each packet with an arbitrary destination. Traffic is
  constrained by link capacities and buffer sizes, and
  packets may be dropped at any time. The goal of the
  routing algorithm is to maximize throughput, i.e.,
  route as many packets as possible to their destination.
  Our main result is an $O\left(\log
  n\right)$-competitive deterministic algorithm for an
$n$-node line network (i.e., $1$-dimensional grid),
requiring only that buffers can store at least $5$
packets, and that links can deliver at least $5$ packets
per step. We note that $O(\log n)$ is the best ratio
known, even for randomized
  algorithms, even when allowed large buffers and wide
  links. The best previous deterministic algorithm for
  this problem with constant-size buffers and
  constant-capacity links was $O(\log^5 n)$-competitive.
  Our algorithm works like admission-control algorithms
  in the sense that if a packet is not dropped
  immediately upon arrival, then it is ``accepted'' and
  guaranteed to be delivered.
We also show how to extend our algorithm to a polylog-competitive
algorithm for any constant-dimension grid.
\end{abstract}

\ifnum\spaa=1
\keywords{
\else
\paragraph{Keywords.}
\fi
Online Algorithms, Packet Routing,  Bounded Buffers, Admission Control, Grid Networks
\ifnum\spaa=1
}
\fi


\if\spaa=0
\end{titlepage}
\pagenumbering{arabic}
\fi
\section{Introduction}
The core function of any packet-switching network is to route packets
from their origins to their destinations, but many
fundamental questions about packet routing are far from being well
understood. In this paper we consider one of these
questions, namely the \emph{competitive throughput network model}, introduced
by~\cite{AKOR}.
%
%
%

Briefly, the model is as follows.  The network consists of $n$ nodes
(switches) connected by point-to-point unidirectional communication
links, and we are given
two positive integer parameters, $B$ and $c$, called the buffer size
and link capacity, respectively.
%
Executions proceed as follows.
Packets are input by an adversary over time. Each packet is input at its \emph{source} node with a
given \emph{destination} node.
At each step, each packet is either forwarded over an incident link,
stored in its current location buffer, or dropped (i.e., removed from
the system).
Storing and forwarding are subject to the constraints that a buffer
can store at most $B$ packets simultaneously and that a link can carry
at most $c$ packets in a time step.  These constraints can be met for
all input sequences since the model allows for packets to be
{dropped} at any time.  The \emph{routing algorithm} selects, at
each step, which packets are forwarded, which are stored, and which
are dropped.  The goal of the algorithm is to maximize the number of
packets delivered at their destination.  Since we consider on-line
algorithms, we evaluate algorithms by their competitive ratio, i.e.,
the minimum ratio, over all finite packet input sequences, between the
number of packets delivered by the on-line algorithm and the maximum
number of packets that can be delivered by any (off-line)
constraint-respecting schedule.

It is nearly an embarrassment to find that very little is known about
this problem, even in the simplest case, where the network topology is
the trivial $n$-node unidirectional line. In this work we provide
an improved deterministic algorithm  for networks whose topology is
a $d$-dimensional grid.

\renewcommand{\arraystretch}{1.3}
\begin{table*}
\ifnum\spaa=0  \scriptsize \fi
\begin{center}
\begin{tabular}{|c|c|c|c|c|l|}
\hline
Ref. & Dim.& Comp.\ Ratio  & Deterministic?& Range of $B,c$& Remarks \tabularnewline
\hline
\hline
\cite{DBLP:conf/icalp/EvenM10,DBLP:conf/spaa/EvenM11,EM14}&1&$O(\log n)$&\checkmark&$B,c> \log n,~B/c = n^{O(1)}$&immediate from s-t reduction\\
\cite{AKK} & $1$ & $O(\log^3 n)$  &---  & $B\geq 2, c=1$ &  
\tabularnewline
\cite{AZ}   & $1$ & $O(\log^2 n)$ & --- &  $B \geq 2, c=1$ &  
FIFO buffers\tabularnewline
\cite{DBLP:conf/icalp/EvenM10,EM14}  & $1$ & $O(\log n)$ & --- &  $B
\in [1, \log n], c\geq 1$ &  
{also for $\log n \leq B/c \leq n^{O(1)}$}
\tabularnewline
\cite{DBLP:conf/spaa/EvenM11,EM14}   & $1$ & $O(\log^5 n)$ &
\checkmark &  $[3, O(\log n)]$ &  preemptive 
\tabularnewline
\cite{DBLP:conf/spaa/EvenM11,EM14}   &$d$ & $O(\log^{d+4} n)$ &
\checkmark &  $[3, O(\log n)]$ &  preemptive
\tabularnewline
\hline
\theoremref{thm:main}
&$1$ & $O(\log n)$ & \checkmark &  $[5, O(\log n)]$  & 
\tabularnewline
\theoremref{thm:d dim}&$d$ & $2^{O(d)}\cdot \log^d n$ & \checkmark &  $[2^{d+1}+1, O(\log n)]$  & 
\tabularnewline
\hline
\end{tabular}
\end{center}
\caption{\it Some results for centralized online algorithms for packet routing. The networks are
  uni-directional grids. In the special case
  of $B=0$ and $c \geq 3$, the algorithm in \cite{DBLP:conf/spaa/EvenM11,EM14}
  is $O(\log^{d+2} n)$-competitive.
}
\label{table:previous work}
\end{table*}
\renewcommand{\arraystretch}{1}

\subsection{Our Results}
Our main result is a centralized 
\emph{deterministic} $O(\log n)$-competitive packet routing algorithm for
unidirectional lines with $n$ nodes. 
The algorithm requires buffer size $B\geq 5$ and link capacity $c\geq 5$. In
addition, both $B$ and $c$ must be $O(\log n)$. We show how to extend the
algorithm to $d$ dimensions, where the competitive ratio is $2^{O(d)}\cdot \log^d n$,
assuming that $B,c\geq 2^{d+1}+1$. Our algorithm is \emph{nonpreemptive}, namely,
packets are dropped only at the time of their arrival (similarly to admission control
policies, which ``accept'' or ``reject'' requests upon arrival). By contrast,
{preemptive} algorithms may drop packets at any time, i.e., packets are not
guaranteed to reach the destination even after they start traversing the
network. 
The best previous deterministic algorithm \cite{DBLP:conf/spaa/EvenM11,EM14} is
preemptive.

\tableref{table:previous work} provides  a summary of our
results and a comparison of our algorithm with some
previous results along various aspects.

\newpage
\subsection{Overview of Techniques}
\label{sec:tech}

We first explain
our approach  for the 1-dimensional case.

The high-level idea is to reduce packet routing in a graph
$G$ to circuit switching (or path packing,
see~\cite{KT,AAP}) in the \emph{space-time graph}  $G\times
T$, where $T$ denotes the set of time steps. This so-called
{space-time transformation} has been used extensively in
this context~\cite{AAF,ARSU,AZ,RR,
DBLP:conf/icalp/EvenM10,DBLP:conf/spaa/EvenM11,EM14}.
To be effective, the space-time
transformation requires an upper bound on path lengths which does not
result in losing too
much throughput. We use the bound of \cite{DBLP:conf/spaa/EvenM11, EM14}
(which extends \cite{AZ}),
that ensures that the loss is at most some constant fraction.
After the transformation, we have an instance of
\emph{online path packing}~\cite{AAP,
BN06}. 
It is known that if the capacities are
large enough, i.e., $\log n\le B,c\le n^{O(1)}$, then online path
packing is solvable with logarithmic competitive
ratio~\cite{AAP,DBLP:conf/icalp/EvenM10,DBLP:conf/spaa/EvenM11,EM14}. 
We overcome the difficulty that $B$ and $c$ are $O(\log n)$ by employing a technique
called \emph{tiling}, i.e., partitioning the network nodes into large enough subgrids.
Tiling has been used in the past
\cite{KT,BL,DBLP:conf/icalp/EvenM10,DBLP:conf/spaa/EvenM11,EM14}; in
our algorithm, we use $4$ distinct tilings, 
and work on each of them independently.  Each tiling induces a new graph called the
\emph{sketch graph} whose nodes are the tiles. The capacity of the edges in the
space-time graph between adjacent tiles is $O(\log n)$ to allow for applying $O(\log
n)$-competitive path packing algorithms.  Path packing algorithms over the sketch
graph produce sketch paths for accepted packets. Thus, after these preliminary
simplifications, we arrive at the sub-task of \emph{detailed routing}, in which
coarse sketch  paths must be expanded to paths in the original space-time graph.

\emph{Fractional Optimum.}
Key to our application of the path-packing algorithm is
the analysis of Buchbinder and Naor~\cite{BN06,BNsurvey}, which
bounds the performance of the algorithm w.r.t.\ the \emph{fractional}
optimum, which may deliver packet fractions.
This result allows us to scale buffer sizes and link
capacities up and down while keeping the competitive ratio
under control.

\emph{Combining algorithms.} Another central component in the analysis of our
algorithm is the combination technique introduced by Kleinberg and Tardos \cite{KT}.
Loosely speaking, this technique deals with an admission control algorithm that is
the conjunction of two competitive algorithms, the state of which depends only on the
requests accepted by both. The technique enables one to prove that the competitive
ratio of the combined algorithm is the sum (rather than the product) of the
competitive ratios of the constituent algorithms.

\subsection{Previous Work}
Algorithms for dynamic routing on networks with bounded
buffers have been studied extensively both in theory and in practice
(see, e.g.,~\cite{AKRR} and references therein).
Let us first focus on centralized algorithms for $d$-dimensional grids.
%
We note that while centralized algorithms for packet routing were
always relevant for switch scheduling, recently the idea of centralization of
network functions, including route computation, gained substantial
additional traction due to
the concept of software-defined networks (SDN). See, e.g.,~\cite{nox}.
 The special case of $2$-dimensional grids (with or without
buffers) is of particular interest as this is the underlying
topology of crossbars in switches \cite{T}.

\emph{Online Algorithms for Unidirectional Lines.}
There is a series of papers on uni-directional line
networks, starting with~\cite{AKOR}, which introduced the
model. In~\cite{AKOR}, a lower bound of $\Omega(\sqrt{n})$
was proved for the greedy algorithm on unidirectional lines
if the buffer size $B\ge2$.  For the case $B=1$ (in a
slightly different model), an $\Omega(n)$ lower bound for
any deterministic algorithm was proved by~\cite{AZ,AKK}.
Both~\cite{AZ} and~\cite{AKK} developed, among other
things, online randomized centralized algorithms for
uni-directional lines with $B\geq 2$. In~\cite{AKK} an
$O(\log^3 n)$-competitive randomized centralized algorithm
was presented for $B\ge2$.  In addition,  it is proved in \cite{AKK}
that nearest-to-go is
$\tilde{O}(\sqrt{n})$-competitive for $B\geq 2$.  For the case $B=1$,
~\cite{AKK} presented a randomized
$\tilde{O}(\sqrt{n})$-competitive distributed algorithm.
(This algorithm also applies to rooted trees when all
packet are destined at the root.) In~\cite{AZ}, an
$O(\log^2 n)$-competitive randomized algorithm was
presented for the case $B\geq 2$.  (This algorithm also
applies to rings and trees.)
In~\cite{DBLP:conf/icalp/EvenM10}, an $O(\log n)$-competitive,
nonpreemptive, %
randomized algorithm was presented. The algorithm
in~\cite{DBLP:conf/icalp/EvenM10} is applicable to a wide range of
buffer sizes and link capacities, 
including the case $B=c=1$.  In~\cite{DBLP:conf/spaa/EvenM11}, an $O(\log^5
n)$-competitive 
deterministic algorithm was
presented. The algorithm in~\cite{DBLP:conf/spaa/EvenM11} is
applicable 
for $B,c \in [3,\log n]$.

\sloppy
\emph{Online Algorithms for Unidirectional 
Grids.} Angelov et al.~\cite{AKK} showed that the
competitive ratio of greedy algorithms in unidirectional
$2$-dimensional grids is $\Omega(\sqrt{n})$ and that
nearest-to-go policy achieves a competitive ratio of
$\tilde{\Theta}(n^{2/3})$.
In~\cite{DBLP:conf/spaa/EvenM11}, an $O(\log^6
n)$-competitive deterministic 
algorithm was
presented.
An extension of
this algorithm to $d$-dimensional unidirectional grids, with competitive
ratio $O(\log^{d+4} n)$,  is
presented in~\cite{DBLP:conf/spaa/EvenM11}.

For more related results, 
refer to~\cite{EM14}.

\paragraph{Organization\ifnum\spaa=0.\fi}
The problem is formalized in \sectionref{sec:problem}.
In \sectionref{sec:prelim} we explain the reduction of
packet-routing to path packing, and the construction of sketch graph.
%
In \sectionref{sec:alg} we describe the overall algorithm,
and in \sectionref{sec:analysis} we analyze it. Sections
\ref{sec:prelim}--\ref{sec:analysis} deal with the
$1$-dimensional grid (line); extension to the $d$
dimensional case is also discussed in
\sectionref{sec:d dim}.


\section{Model and Problem Statement}
\label{sec:problem}
\label{sect:problem}

We consider the standard model of synchronous store-and-forward packet
routing networks~\cite{AKOR,AKK,AZ}. The network is modeled by a
directed graph $G=(V,E)$, and by two integer parameter $B,c>0$.
For the most part of this paper, we consider a network whose topology
is  a \emph{directed line} of $n$
vertices, i.e.,
$V=\{v_0,\ldots,v_{n-1}\}$, $E=\Set{(v_{i-1},v_i\mid 0<i<n}$.

Execution proceeds in discrete steps. In step $t$, an arbitrary set of
\emph{requests} is input to the algorithm. Each request represents a
packet, and we will use both terms interchangeably. A request is
specified by a $3$-tuple
$r_i=(a_i,b_i,t_i)$, where $a_i\in V$ is the \emph{source node} of the
packet, $b_i\in V$ is its \emph{destination node}, and $t_i\in \mathbb{N}$ is
the time step in which the request is input.

In each time step, the \emph{routing algorithm} removes packets that
reached their destination, and  decides, for each
packet currently in the network, including
packets input in the current
step, whether
\begin{inparaenum}[(i)]
\item to drop the packet, or
\item to send it over an incident link, or
\item to store it in the current node.
\end{inparaenum}
The selection of the action is done subject to the following
considerations.
\begin{compactitem}
\item If a packet is dropped, it is lost forever.
\item A packet sent from node $u$ over link $(u,v)$ at time $t$ will
  be located at node $v$ at time $t+1$. The \emph{link capacity constraint} asserts that at any step,
  at most $c$ packets can be sent over each link.
\item A packet stored at node $u$ at time $t$ will
  be located at node $u$ at time $t+1$. The \emph{buffer capacity
    constraint} asserts that at any step,
  at most $B$ packets can be stored in each buffer.
\end{compactitem}
We use the following terminology.  A packet
$r_i=(a_i,b_i,t_i)$ is said to be \emph{input} (or arrive)
at $a_i$ at time $t_i$. We say that $r_i$ is
\emph{rejected} if it is dropped at time $t_i$, otherwise
it is \emph{accepted}. (Our algorithm will guarantee that
all accepted packets arrive at their destination.)

Given a set of requests, the \emph{throughput} of a packet routing
algorithm is the number of
packets that are delivered to their destination.  We consider the
problem of maximizing the throughput of an online centralized
deterministic packet-routing algorithm.
By \emph{online} we mean that by time $t$, the algorithm received as
input only requests that have been input by time $t$.
By \emph{centralized} we mean that the algorithm receives
all
requests and controls all  packets currently in the system without delay.
By \emph{nonpreemptive} we mean that every accepted packet reaches its
destination.




\emph{Competitive Ratio.}
Let $\sigma$ denote an input sequence. Let $\alg$ denote a
packet-routing algorithm.  Let $\alg(\sigma)$ denote the 
throughput obtained by $\alg$ on input $\sigma$. 
Let $\opt(\sigma)$ denote the largest possible subset of
requests in $\sigma$ that can be delivered without
violating the capacity constraints. We say that an online
deterministic \alg\ is \emph{$\rho$-competitive}, if for
every input sequence $\sigma$, $|\alg(\sigma)| \geq \frac
1\rho \cdot |\opt(\sigma)|$. Our goal is to design an
algorithm with the smallest possible competitive ratio.

\section{First Steps}
\label{sec:prelim}
In this section we present preliminary simplifications we apply to the
problem. They include reducing the packet routing on a line problem to path
packing on grids, and then path packing on sketch graphs.

\subsection{From Packet-Routing on a Line to Path Packing in a Grid}
\label{sec:reduction}


Let $G=(V,E)$ denote a directed line with link capacities $c$ and
buffer sizes $B$. The space-time grid of $G=(V,E)$ is a directed
acyclic infinite graph $G^{st} =(V^{st},E^{st})$ with edge capacities
$c^{st}(e)$, where
\begin{inparaenum}[(i)]
\item $V^{st} \triangleq V\times \mathbb{N}$. Each vertex $v \in V$
  has infinitely many copies in the space-time grid $G^{st}$; namely,
  vertex $(v,t)\in V^{st}$ is the copy of $v$ that corresponds to time
  $t$.
\item $E^{st}\triangleq E_0\cup E_1$ where $E_0$ denotes forward edges
  and $E_1$ denotes the store edges.  Formally, $E_0\triangleq \{
  (u,t)\rightarrow(v,t+1)\::\: (u,v)\in E~,~t\in\mathbb{N}\}$ and $E_1
  \triangleq \{ (u,t)\rightarrow (u,t+1) \::\: u\in V, t\in
  \mathbb{N}\}$.
\item The capacity of all edges in $E_0$ is $c$, and all edges in
  $E_1$ have capacity $B$.
\end{inparaenum}

\emph{The transformation.} We transform a request
$r_i=(a_i,b_i,t_i)$ for routing a packet in the directed
line $G$ to a path request $r^{st}_i=((a_i,t_i),\row(b_i))$
in the grid $G^{st}$. The correctness of the reduction is
based on a one-to-one correspondence between paths in
$G^{st}$ and a routing of a packet in $G$. Each vertical
edge $(v_i,t)\rightarrow(v_{i+1},t+1)$ in $G^{st}$
corresponds to forwarding a packet from $v_i$ to $v_{i+1}$
in step $t$, and each horizontal edge
$(v_i,t)\rightarrow(v_{i},t+1)$ in $G^{st}$ corresponds to
storing a packet in $v_i$ in step $t$.

\emph{Embedding in the plane.} The na\"ive depiction of
$G^{st}$ maps vertex $(v_i,t)$ to the point $(t,i)$ in the
plane (i.e., the $x$-axis is the time axis and the $y$-axis
is the ``vertex-index'' axis). This embedding of $G^{st}$
results with a lattice of vertices in which edges are
either horizontal or diagonal.  We prefer the embedding in
which the edges are axis parallel, which means that vertex
$(v_i,t)$ is mapped to the point $(t-i,i)$.  In the
axis-parallel depiction, all the copies of a vertex $v_i\in
V$ still reside in the $i$th row. However, column $j$
corresponds to a traversal  of the complete line, starting
at $v_0$ at time $j$ and ending at $v_{n-1}$ at time
$j+n-1$.

\subsection{From One Grid to Four Sketch Graphs}\label{sec:sketch graph}
Given a grid generated by the transformation above, we apply another
transformation to produce a coarsened version, called the \emph{sketch
  graph}. Specifically, we use
\emph{tiling}. Tiling is a partition of the grid nodes into
 $\Hl\times \vl$ subgrids, where $\Hl$ and  $\vl$ are parameters to
 be determined later.  We also add dummy nodes to the space-time grid
$G^{st}$ to complete all tiles.  This augmentation has no effect on
routing because a dummy vertex
does not belong to any route between real vertices.

The tiling is specified by two additional parameters $\phi_{x}$ and $\phi_{y}$ called
\emph{offsets}.  The offsets determine the positions of the corners of the tiles;
namely, the left bottom corner of the tiles are located in the points
$(\phi_{x}+i\cdot\Hl ,\phi_{y}+j\cdot\vl)$, for $i,j\in \mathbb{N}$. The algorithm
uses four offsets $(\phi_{x}, \phi_{y})\in \{-\Hl/2,0\}\times \{-\vl/2,0\}$.  We
denote these four tilings by $T_1,\ldots,T_4$.

\begin{proposition}\label{prop:sw}
  For every vertex $(v,t)$ of the space-time grid $G^{st}$, there exists exactly one
  tiling $T_j$ such that $(v,t)$ is in the south-west quadrant of a tile of $T_j$.
\end{proposition}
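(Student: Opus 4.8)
The plan is to reduce this two-dimensional statement to two independent one-dimensional observations, one per coordinate axis. First I would fix the axis-parallel embedding from the previous subsection, so that a vertex $(v_i,t)$ of $G^{st}$ sits at the plane point $(x,y)=(t-i,i)$. For a tiling with offsets $(\phi_x,\phi_y)$, the tile containing $(x,y)$ is the half-open box $[\phi_x+i\cdot\Hl,\phi_x+(i+1)\cdot\Hl)\times[\phi_y+j\cdot\vl,\phi_y+(j+1)\cdot\vl)$ for the appropriate integers $i,j$, and its south-west quadrant is the bottom-left quarter of dimensions $(\Hl/2)\times(\vl/2)$. Hence $(x,y)$ lies in the SW quadrant of its own tile in this tiling if and only if the two conditions $(x-\phi_x)\bmod\Hl\in[0,\Hl/2)$ and $(y-\phi_y)\bmod\vl\in[0,\vl/2)$ both hold. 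The point is that these two conditions involve disjoint variables, so they can be analyzed separately.

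Next I would show that each axis contributes a unique offset. Consider the horizontal axis, where $\phi_x\in\{0,-\Hl/2\}$. The residue $x\bmod\Hl$ lands in exactly one of the two halves $[0,\Hl/2)$ or $[\Hl/2,\Hl)$. If it lands in $[0,\Hl/2)$ then the horizontal condition holds precisely for $\phi_x=0$ (and fails for $\phi_x=-\Hl/2$, since shifting by $-\Hl/2$ maps this half onto $[\Hl/2,\Hl)$); symmetrically, if it lands in $[\Hl/2,\Hl)$ then only $\phi_x=-\Hl/2$ works. Either way, exactly one value of $\phi_x$ satisfies the horizontal condition. The identical argument with $\vl$ in place of $\Hl$ shows that exactly one value of $\phi_y$ satisfies the vertical condition.

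Combining the two, since the four tilings $T_1,\ldots,T_4$ correspond bijectively to the four offset pairs $(\phi_x,\phi_y)\in\{0,-\Hl/2\}\times\{0,-\vl/2\}$, and each axis independently pins down a unique admissible offset, there is exactly one pair — hence exactly one tiling $T_j$ — for which both conditions hold simultaneously. This establishes both the existence and the uniqueness asserted in the proposition.

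I expect no genuine obstacle here; the only care needed is bookkeeping of the half-open interval conventions and verifying the modular identity that a shift by $-\Hl/2$ exactly interchanges the two half-intervals $[0,\Hl/2)$ and $[\Hl/2,\Hl)$ modulo $\Hl$ (and likewise for $\vl$). Once these elementary modular facts are recorded, the product structure of the four offsets makes the conclusion immediate.
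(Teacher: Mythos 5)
Your proof is correct. The paper itself states \propref{prop:sw} without any proof, treating it as immediate from the choice of the four offset pairs $(\phi_x,\phi_y)\in\{-\Hl/2,0\}\times\{-\vl/2,0\}$; your argument — reducing membership in a SW-quadrant to the two independent conditions $(x-\phi_x)\bmod\Hl\in[0,\Hl/2)$ and $(y-\phi_y)\bmod\vl\in[0,\vl/2)$, and observing that shifting by half a tile length exactly complements each condition — is precisely the intended justification, spelled out in full. The only bookkeeping point worth recording (which the paper also glosses over) is that $\Hl$ and $\vl$ are defined via ceilings and so may be odd, in which case ``$\Hl/2$'' must be read consistently (say, as $\lceil\Hl/2\rceil$ for both the offset and the quadrant boundary) so that the two half-intervals remain complementary; with any fixed such convention your argument goes through unchanged.
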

\noindent \propref{prop:sw} suggests a partitioning of the requests.
\begin{definition}
  A request $r_i=(a_i,b_i,t_i)$ is in $\sw_j$ if the source vertex $(a_i,t_i)$ of request $r_i$
  belongs to the south-west quadrant of a tile in the tiling $T_j$.
\end{definition}

\emph{The Sketch Graphs.}  Each tiling $T_j$ induces a grid, called the sketch graph,
each vertex of which corresponds to a tile. The sketch graph induced by $T_j$ is
denoted by $S_j\DEF(V(S_j),E(S_j))$, where $V(S_j)$ is the set of tiles in $T_j$.
There is a directed edge $(s_1,s_2)\in E(S_j)$ if $s_1\neq s_2$ and $E^{st}\cap (s_1\times s_2)\neq
\emptyset$.  All edges in the sketch graph are assigned unit capacity.

\subsection{Online Packing of Paths}\label{sec:IPP}
We use  the sketch graphs to solve \emph{path packing} problems.
%
Intuitively, the path packing model resembles the packet routing
model, except that there are no buffers, and that each link $e$ may
have a different capacity $c(e)$.
In addition,  we
generalize the notion of a request to allow for a set of destinations
(similar to ``anycast'') as follows. Usually, the destination of a request consists of a
single vertex. If $G$ is a directed graph, then it is easy to reduce
the case in which the destination is a subset to the case in which
the destination is a specific vertex. The reduction simply adds a
sink node that is connected to every vertex in the destination
subset. In our setting of space-time grid, the destination subset is
a row. Thus it suffices to add a sink node for each row (as in~\cite{AZ}).

Formally, a \emph{path request} $r_i$ in $G$ is a pair $(a_i, D_i)$, where $a_i\in V$
is the source vertex and $D_i\subseteq V$ is the destination subset.  Let $P(r_i)$
denote the set of paths that can be used to serve request $r_i$; namely, every path
$p\in P(r_i)$ begins in $a_i$, ends in a vertex in $D_i$, and satisfies some
additional constraint (e.g., bounded length, bounded number of turns, etc.).  Given a sequence $R=\{r_i\}_{i\in I}$
of path requests, we call a sequence $P=\{p_i\}_{i \in J}$ a \emph{partial routing}
of $R$ if $J\subseteq I$ and $p_i\in P(r_i)$ for every $i\in J$.  The \emph{load} of
an edge $e\in E$ induced by $P$ is the ratio $|\{p_j \in P: e\in p_j\}|\over
c(e)$. 
A partial routing of a set of path requests is called a \emph{$\beta$-packing} if the
load induced on each edge is at most $\beta$.  The \emph{throughput} of $P$ is simply
the number $|J|$ of paths in $P$.

\paragraph{Integral and Fractional Partial Routings\ifnum\spaa=0.\fi}
In the integral scenario,
a path request is either  served by a single path or is not served.
In \emph{fractional routing},
a request $r_i$ can be (partially) served by a combination of paths
$p_1,\ldots,p_k$. Namely, each path $p_j$ serves a fraction
$\lambda_j$ of the request, where $\lambda_j\ge 0$ for all
$j$ and  $\sum_j \lambda_{j}\le1$. We refer to $\sum_j \lambda_{j}$ as the \emph{flow
amount} of request $r_i$. The \emph{load} of an edge $e\in
E$ induced by request $r_i$ is the ratio $\sum_{j: e\in
p_j} \lambda_{j} / c(e)$. A fractional solution is
$\beta$-packing if the total load on in each edge, from all
requests, is at most $\beta$.   The \emph{throughput} of a
fractional routing is the sum of the flow amounts of all requests. Given a
fractional routing $g$, we use $|g|$ to denote its
throughput.
Trivially, the maximum throughput attainable
by a fractional $\beta$-packing is an upper bound on the maximum
throughput attainable by an integral $\beta$-packing.  An
optimal-throughput fractional $\beta$-packing can be computed off-line
by
solving a linear program.

\paragraph{Online Path Packing: Problem and Solution\ifnum\spaa=0.\fi}
In the
online path packing problem, the input is a sequence of
path requests $R=\{r_i\}_{i\in I}$.  Upon arrival of a
request $r_i$, the algorithm must either allocate a path
$p\in P(r_i)$ to $r_i$ or reject $r_i$.  An online path
packing algorithm is said to be
\emph{$(\alpha,\beta)$-competitive} if it computes a
$\beta$-packing whose throughput is at least $1/\alpha$
times the maximum throughput over all $1$-packings. Note
that for online path packing, we assume that all edges have
capacity at least $1$.

The online path packing algorithm in~\cite{AAP} (analyzed also by~\cite{BN06})
assigns weights to the edges that are exponential in the load of the edges. This load
is the load incurred by the paths allocated to the requests that have been accepted
so far. The algorithm is based on an oracle that is input $r_i$ and the edge weights,
and outputs a lightest path $p_i$ in $P(r_i)$. If the weight of $p_i$ is large, then
request $r_i$ is rejected; otherwise, request $r_i$ is routed along $p_i$.
We refer to the online algorithm for online integral path packing by
$\route$.
The competitive ratio of the \route\ algorithm is summarized in the following theorem.
\begin{theorem}[\cite{EM14}, following~\cite{AAP, BN06}]
\label{thm:IPP}\sloppy Consider an online path packing
problem on an infinite graph with edge capacities such that $\inf_{e}c(e) \geq 1$.
Assume that, for every request $r_i$, the length of every legal path in $P(r_i)$ is
bounded by $p_{\max}$.  Then algorithm $\route$ is $(2,\log(1+ 3\cdot
\pmax))$-competitive online integral path packing algorithm.  Moreover, the
throughput of $\route$ for any request sequence is at least $1/2$ the throughput of
any fractional packing for that sequence.
\end{theorem}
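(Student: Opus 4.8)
The plan is to recognize $\route$ as an instance of the online primal--dual schema of Buchbinder and Naor and to carry out the standard weak-duality accounting. First I would write down the natural packing LP whose variables $x_{i,p}\ge 0$ (one per request $r_i$ and legal path $p\in P(r_i)$) maximize the total flow $\sum_{i,p}x_{i,p}$ subject to the per-request constraints $\sum_{p\in P(r_i)}x_{i,p}\le 1$ and the capacity constraints $\sum_{i}\sum_{p\ni e}x_{i,p}\le c(e)$; its optimum is exactly the best fractional $1$-packing. The dual is the covering LP with an edge variable $y_e\ge 0$ per capacity constraint and a request variable $z_i\ge 0$ per request, minimizing $\sum_e c(e)\,y_e+\sum_i z_i$ subject to $z_i+\sum_{e\in p}y_e\ge 1$ for every $i$ and every $p\in P(r_i)$. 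The exponential edge weights maintained by $\route$ are precisely (a reparametrization of) the dual variables $y_e$, and the lightest-path oracle computes $\arg\min_{p\in P(r_i)}\sum_{e\in p}y_e$.

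Next I would make the accept/reject rule and the weight update explicit in dual terms: on arrival of $r_i$ with lightest path $p_i$, reject (and set $z_i=0$) if $\sum_{e\in p_i}y_e\ge 1$, and otherwise accept, route along $p_i$, set $z_i=1-\sum_{e\in p_i}y_e>0$, and multiplicatively raise each $y_e$, $e\in p_i$, by $y_e\leftarrow y_e(1+1/c(e))+\tfrac{1}{(\mu-1)c(e)}$ with base $\mu=1+3\pmax$. Two invariants then drive the proof. (i) \emph{Dual feasibility:} because the $y_e$ only increase and $p_i$ is a lightest path, each dual constraint is met the moment $r_i$ is processed and stays met --- rejected requests already have $\sum_{e\in p}y_e\ge\sum_{e\in p_i}y_e\ge 1$, and accepted ones are made tight on $p_i$ and hence $\ge 1$ on every $p$. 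Weak duality then gives that the final dual value $D=\sum_e c(e)y_e+\sum_i z_i$ upper bounds the fractional $1$-packing optimum, and a fortiori the best integral $1$-packing throughput. (ii) \emph{Bounded dual growth:} a short calculation shows that an accepted request increases $D$ by exactly $1+|p_i|/(\mu-1)\le 1+\pmax/(3\pmax)\le 2$, while a rejection leaves $D$ unchanged. Letting $N$ denote $\route$'s throughput, each acceptance raises $N$ by one, so $D\le 2N$; combining with (i) yields $N\ge \tfrac12 D\ge \tfrac12\cdot(\text{fractional optimum})$, which gives both the throughput factor $2$ and the stronger ``moreover'' statement.

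It remains to bound the capacity violation, i.e.\ the factor $\beta=\log(1+3\pmax)$. I would observe that an edge $e$ receives a new path only while $y_e<1$ (its weight is one of the nonnegative summands of a total below $1$), so if $k_e$ paths are routed through $e$ then $y_e$ was still below $1$ before the last update. Solving the geometric recurrence for $y_e$ as a function of the number of routings gives $y_e=\tfrac{1}{\mu-1}\bigl((1+1/c(e))^{k_e}-1\bigr)$, whence $(1+1/c(e))^{k_e-1}<\mu=1+3\pmax$; using $\inf_e c(e)\ge 1$ to lower bound $\ln(1+1/c(e))$ converts this into a bound of $\log(1+3\pmax)$ on the relative load $k_e/c(e)$, so the routing produced is a $\beta$-packing.

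The main obstacle I anticipate is the simultaneous tuning of the tunable constants --- the base $\mu=1+3\pmax$, the additive term in the update, and the rejection threshold --- so that the per-request dual increase stays at most $2$ \emph{and} the load bound comes out exactly $\log(1+3\pmax)$, all while assuming only $\inf_e c(e)\ge 1$ (so $c(e)$ may be as small as $1$, and a crude estimate such as $\ln(1+1/c)\ge 1/(c+1)$ must suffice). A secondary point to address is the infinite space--time graph: one notes that any finite request sequence touches only finitely many edges, so the LP, its dual, and the weak-duality argument are all applied to a finite subinstance and the resulting bounds hold uniformly over the sequence.
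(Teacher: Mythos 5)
The paper does not prove Theorem~\ref{thm:IPP} at all---it is imported verbatim from \cite{EM14}, following \cite{AAP,BN06}---and your proposal is precisely the standard Buchbinder--Naor primal--dual analysis (exponential edge weights as dual variables, lightest-path oracle, dual feasibility plus a per-acceptance dual increase of at most $2$, and a geometric-growth argument for the $\log(1+3\pmax)$ load bound) that those references carry out. The argument is sound as sketched; the only loose end is the one you already flag, namely that pinning the load bound to exactly $\log(1+3\pmax)$ rather than a constant multiple of it depends on the precise choice of update constants, which the paper leaves to the cited source.
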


\paragraph{Bounded Path Lengths\ifnum\spaa=0.\fi}
The load obtained by the \IPP\ algorithm is logarithmic in the maximum path length
$\pmax$. This suggests that $\pmax$ should be polynomial in $n$.
Lemma~\ref{lemma:nB} states that limiting the number of store steps per packet by a
polynomial in $n$ decreases the fractional throughput only by a constant factor.

We use the following notation. Given a request sequence $R=\{r_i\}_i$, let
$f^*(R)$ denote a maximum throughput fractional $1$-packing of $R$, and let
$f^*(R|\pmax)$ denote a maximum throughput fractional $1$-packing with respect to
$R$ under the constraint that each path is of length at most $\pmax$.
\begin{lemma}[after \cite{AZ}]\label{lemma:nB}
  Let $\pmax\triangleq 2n \cdot (1+\frac{B}{c})$.  Then $|f^*(R| \pmax)|
  >0.31\cdot |f^*(R)|$.
\end{lemma}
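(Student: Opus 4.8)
The plan is to start from a maximum-throughput fractional $1$-packing $f^*(R)$ of the space-time grid $G^{st}$ and to extract from it a bounded-length subpacking of comparable value. First I would flow-decompose $f^*(R)$ into weighted source-to-row paths. A path serving a request $r_i=(a_i,b_i,t_i)$ uses exactly $b_i-a_i<n$ forward (vertical) edges and some number $s_i$ of store (horizontal) edges, so its length is $(b_i-a_i)+s_i$ and exceeds $\pmax=2n(1+B/c)$ only when $s_i>\pmax-n=n(1+2B/c)$. Thus it suffices to show that discarding the flow carried on paths with too many store steps costs only a constant factor, and the whole difficulty is to bound the amount of storage a good solution must use.

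The heart of the argument is to bound the flow-weighted average time-in-system $\bar W \DEF (F+S)/|f^*(R)|$, where $F\DEF\sum_i f_i(b_i-a_i)$ is the total forward flow and $S\DEF\sum_i f_i s_i$ is the total store flow (here $f_i$ is the flow amount of $r_i$). Since $b_i-a_i<n$ we have $F<n\cdot|f^*(R)|$, so it remains to charge the storage $S$ against $F$. Following the technique of \cite{AZ}, I would first replace $f^*(R)$ by a minimal-storage optimal flow, in which flow waits at a node-time $(u,t)$ only when it is blocked, i.e.\ when the outgoing forward edge is saturated or the downstream buffer is full. Charging each waiting unit to the $c$ units of forward capacity behind which it waits, and using that a buffer holds at most $B$ units, yields a bound of the form $S\le (B/c)\cdot F$, hence $\bar W<n(1+B/c)=\pmax/2$.

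To turn the average-delay bound into an honest bounded-length packing, I would superimpose a partition of the time axis into consecutive blocks of length $\pmax$ with a uniformly random offset $\phi\in[0,\pmax)$, and keep only the flow of requests whose entire lifetime $[t_i,t_i+W_i]$ (with $W_i=(b_i-a_i)+s_i$) falls inside a single block. The surviving flow is a sub-flow of $f^*(R)$, hence still a feasible $1$-packing, and every surviving path now has length at most $\pmax$, so it is admissible for $f^*(R\mid\pmax)$. A request survives with probability $\max\{0,\,1-W_i/\pmax\}\ge 1-W_i/\pmax$, so the expected surviving throughput is at least $\sum_i f_i(1-W_i/\pmax)=|f^*(R)|\cdot(1-\bar W/\pmax)>\tfrac12\,|f^*(R)|$. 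Fixing an offset that attains the expectation gives a deterministic length-bounded packing, and the gap between $\tfrac12$ and $0.31$ leaves room for the slack incurred by the storage charging and by the boundary effects of the partition.

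The main obstacle is the charging step $S\le(B/c)\cdot F$ in the second paragraph: passing to a minimal-storage optimal flow and justifying that waiting is always witnessed by saturated forward capacity is delicate, because a packet may instead be blocked by a full \emph{downstream} buffer, so the charge must be propagated along the line and amortized rather than applied pointwise. Making this rerouting argument rigorous, and verifying that it preserves both optimality and feasibility, is exactly the content borrowed from \cite{AZ}, and it is where the constants, and hence the precise value $0.31$, are pinned down.
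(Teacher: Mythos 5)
The paper itself contains no proof of \lemmaref{lemma:nB}: it is stated ``after \cite{AZ}'' and imported wholesale from \cite{AZ} and \cite{EM14}, so there is nothing in this document to compare your argument against line by line. Your outline does match the standard strategy behind that result --- decompose $f^*(R)$ into paths, bound the flow-weighted average time-in-system by $\pmax/2$ by charging store steps to forward capacity, then discard the long paths --- so the shape of the argument is right.

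The genuine gap is exactly the step you flag and then defer to the citation: the charge $S\le (B/c)\cdot F$. Passing to a ``minimal-storage'' optimal flow does \emph{not} by itself give you the local witness you need, because in the space-time grid feasibility couples the two edge types: pushing a unit of flow forward earlier (to eliminate a store step at an unsaturated forward edge) forces that unit to occupy buffer capacity at the \emph{downstream} node in an interval where that buffer may already be full, and the cascade of such exchanges can propagate arbitrarily far along the line (your own $B=1$, $c$ large example shows a work-conserving rearrangement can simply be infeasible). Since this charging step \emph{is} the content of the lemma, deferring it to \cite{AZ} means the proposal is a plan rather than a proof. A second, telling symptom: your sketch, if the charging held, would give the constant $1/2$ (Markov on the average lifetime suffices; the random block offset is unnecessary and only loses more), whereas the lemma claims only $0.31$. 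A correct argument that proves a \emph{weaker} constant than your sketch strongly suggests the real proof must pay an extra constant factor somewhere --- e.g., by first scaling the flow down to create slack on both edge types so that the rerouting/exchange argument can actually be carried out --- and your proposal does not account for where that loss occurs or why $0.31$ is the number that comes out.
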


\begin{figure}[t]
      \centering
        \includegraphics[width=0.35\textwidth]{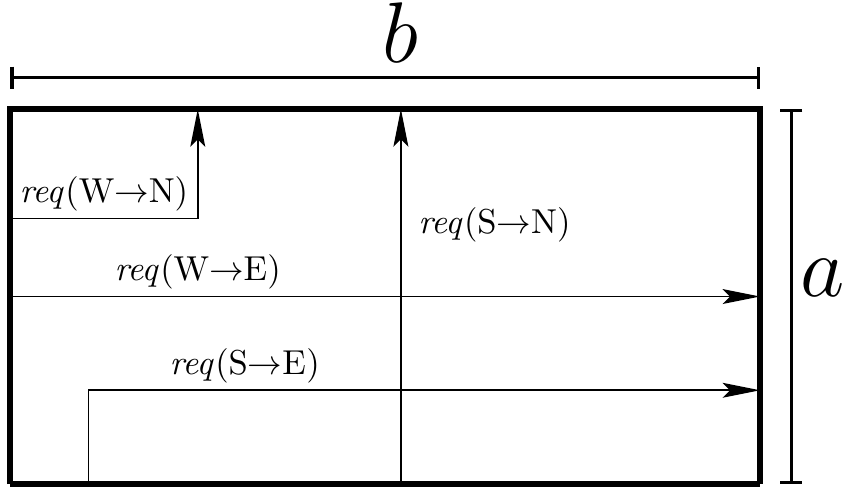}
        \caption{\em An $a \times b$  grid. The four types
        of requests (i.e., $\ppp{W}{N}, \ppp{S}{N}, \ppp{W}{E}, \ppp{S}{E}$) are depicted
        by the four arrows.}
      \label{fig:crossbar}
    \end{figure}

\subsection{Routing paths across 2-d
  Grids}
Consider the following special case of routing in grids.
Suppose that each path request has a specific source vertex which
resides on either the south of the west side, and the destination is
either the north or the east \emph{side} (i.e., we can route to any
vertex on the requested side).
  For $X\in\Set{S,W}$ and $Y\in\Set{N,E}$, let
$\ppp{X}{Y}$ denote the set of path requests whose source
is in the $X$ side and whose destination is the $Y$ side
(see \figref{fig:crossbar}).
The following claim establishes  sufficient and necessary conditions
for satisfying such path requests. We refer to the routing algorithm used in this
case as \emph{crossbar routing}.
\begin{proposition}[Crossbar routing]
\label{prop:crossbar}\sloppy
The path requests can be packed in the $2$-dimensional $a \times b$
directed grid  if and only if $|\ppp{W}{E}|+|\ppp{S}{E}|\leq a$ and
$|\ppp{W}{N}|+|\ppp{S}{N}|\leq b$.
\end{proposition}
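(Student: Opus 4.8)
The plan is to prove the two directions separately, fixing the convention that the grid has $a$ rows and $b$ columns, that edges are directed rightwards (toward the East side) and upwards (toward the North side), and that each boundary vertex is the source of at most one request. In particular the requests in $\WE$ and $\WN$ occupy distinct rows of the West side, and the requests in $\SE$ and $\SN$ occupy distinct columns of the South side; this distinctness is what will make the ``straight'' part of the construction free of collisions. For \emph{necessity}, I would model ``routing to a side'' by attaching a unit‑capacity exit edge from every North‑side vertex to a sink $t_N$ and from every East‑side vertex to a sink $t_E$, so that a request in $\WN\cup\SN$ (resp.\ $\WE\cup\SE$) is served exactly when its path uses a North (resp.\ East) exit edge. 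In any $1$‑packing each served North‑destined request consumes a distinct North exit edge; since the North side has $b$ vertices there are only $b$ such edges, giving $|\WN|+|\SN|\le b$, and symmetrically the $a$ East exit edges force $|\WE|+|\SE|\le a$. This direction is routine.

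The substance is \emph{sufficiency}: assuming the two inequalities, I must build edge‑disjoint paths serving all requests. The key structural fact is that a monotone path uses only rightward (horizontal) edges and upward (vertical) edges, and that two monotone paths may freely cross \emph{at a vertex} without sharing an edge. I would route the two ``straight'' classes first: each request in $\WE$ runs straight along its own source row to the East side, and each request in $\SN$ runs straight up its own source column to the North side. These are automatically pairwise edge‑disjoint, because the $\WE$ paths use only horizontal edges while the $\SN$ paths use only vertical edges, so the two classes never collide.

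It remains to insert the two ``bending'' classes, each routed with a single turn: a request in $\WN$ runs horizontally along its source row to a chosen exit column and then up that column to the North side, while a request in $\SE$ runs up its source column to a chosen exit row and then along that row to the East side. The inequalities guarantee enough tracks: there are $b\ge|\WN|+|\SN|$ columns, so every $\WN$ request can be given an exit column avoiding those used by $\SN$, and dually every $\SE$ request gets an exit row among the $a\ge|\WE|+|\SE|$ rows not used by $\WE$. The delicate point, which I expect to be the \textbf{main obstacle}, is to choose these exit columns and rows, together with the turn points, so that the full family is edge‑disjoint. The only possible clashes are two paths sharing a horizontal edge in a common row or a vertical edge in a common column, and after the assignments above such a clash can arise only when a $\WN$ source row equals a $\SE$ exit row (their horizontal segments must then occupy disjoint column intervals) or a $\WN$ exit column equals a $\SE$ source column (their vertical segments must then occupy disjoint row intervals). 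I would resolve this by allocating the exit tracks \emph{monotonically}: sort the $\WN$ requests by source row and the $\SE$ requests by source column, and hand out exit columns/rows in a matching nested order, so that in every shared row or column the ``left/lower'' interval used by one class and the ``right/upper'' interval used by the other are forced to be disjoint. Verifying that this single nested assignment simultaneously rules out \emph{both} kinds of clash, and that it exists precisely because both counting inequalities hold, is the crux; if the single‑turn scheme turns out to be too rigid in some corner case, one falls back to staircase (multiple‑turn) paths under the same monotone bookkeeping, which is essentially the content of the claim.
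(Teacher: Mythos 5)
Your plan is the same as the paper's: serve $\WE$ and $\SN$ by straight lines (automatically edge-disjoint, since one class uses only horizontal edges and the other only vertical ones), and serve $\WN$ and $\SE$ by single-bend L-paths whose turn points must be chosen so that in a shared row the left interval of a $\WN$ path and the right interval of an $\SE$ path are disjoint, and in a shared column the top interval of a $\WN$ path and the bottom interval of an $\SE$ path are disjoint. You also identify the two clash conditions correctly. The genuine gap is that you stop at exactly the step that \emph{is} the proof: you never exhibit a concrete turn-point assignment and verify that it excludes both clash types, and instead you hedge with a fallback to ``staircase paths,'' which is not an argument. The paper closes this step in one line: every path turns on the diagonal emanating from the SW corner, i.e., the $\WN$ request entering in row $i$ turns at column $i$, and the $\SE$ request entering in column $j$ turns at row $j$. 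With this choice, a row clash would force the $\SE$ exit row $j$ to equal the $\WN$ source row $i$, and a column clash would force the $\WN$ exit column $i$ to equal the $\SE$ source column $j$; in either case the two turn points coincide at the diagonal vertex $(i,i)$, where the two L's cross at a vertex without sharing an edge (one arrives from the west and leaves north, the other arrives from the south and leaves east). For a non-square grid with $a\le b$, the $\SE$ sources in the last $b-a$ columns lie beyond the diagonal; the paper sends each of them north until it meets a row whose incoming horizontal edge is free and turns east there, which is always possible since at most $|\SE|-1\le a-1$ rows carry eastbound traffic at that column.

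For what it is worth, your monotone assignment does work, but only after the verification you declined to do: if the $\WN$ request of rank $k$ by source row gets the $k$-th smallest column not used by $\SN$, and the $\SE$ request of rank $\ell$ by source column gets the $\ell$-th smallest row not used by $\WE$, then a rank count shows that an $\SE$ exit row can equal a $\WN$ source row only when $\ell\ge k$, which forces the $\WN$ turn column to be at most the $\SE$ source column; the column condition is symmetric. So single-bend paths always suffice and the staircase fallback is never needed. Finally, note that your convention ``at most one request per boundary vertex'' is an added hypothesis not present in the statement; it is implicitly assumed by the paper as well (and is what holds in the application, where entering paths are bounded by unit edge capacities), so it is acceptable, but you should state that you are using it in the necessity direction too (distinct exit edges) rather than only for the constructive direction.
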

\begin{proof}
  The ``only if'' part is obvious. We now present a distributed
  algorithm that proves the ``if'' part.  Without loss of generality,
  we may assume that $\ppp{S}{N}$ and $\ppp{W}{E}$ are empty. This
  assumption is satisfied by routing such requests along straight
  paths and giving them precedence over other requests. Thus we may
  ignore these lines henceforth, and we are left with the task of
  routing $\ppp{S}{E}$ and $\ppp{W}{N}$ under the assumption that
  $|\ppp{S}{E}|\leq a$ and $|\ppp{W}{N}|\leq b$.

  These requests are served as follows.  Order the rows from bottom
  to top and the columns from left to right. Assume, w.l.o.g.,  that $a\leq b$
  (the case that $a>b$ is solved analogously).

  Requests whose source vertex is in the first $a$ rows or columns
  turn in the vertex along the diagonal emanating from the SW
  corner. For example, a request in $\ppp{W}{N}$ whose source
  is in row $i$ is routed eastward for $i$ hops, and then
  north for $a-i$ hops (i.e., to the north side of the grid). See
  \figref{fig:crossbar2}.

  The requests whose source vertex is in the last $b-a$ columns are
  routed northward until they reach a vertex that does not receive an
  east-bound path from its west neighbor. Once such a vertex is found,
  the path turns east and continues straight until it reaches the east
  side of the grid.  Indeed, such a right turn is always possible
  because $a\geq |\ppp{S}{E}|$ and hence a ``vacant row'' is always
  found.
\end{proof}

\begin{figure}[t]
      \centering
        \includegraphics[width=0.35\textwidth]{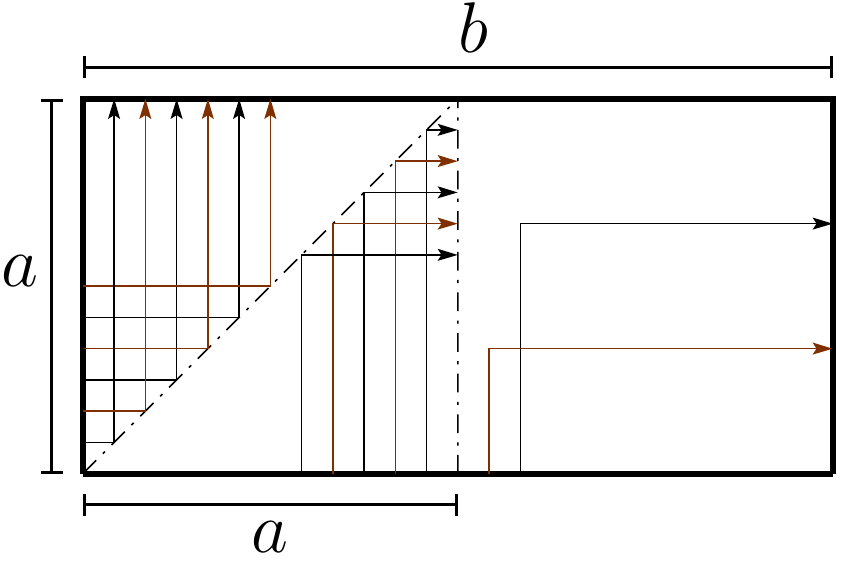}
        \caption{\em Satisfying the path requests in the $a \times b$ two
          dimensional directed grid, where $a\leq b$. 
        }
      \label{fig:crossbar2}
    \end{figure}

\begin{rem}
  \propref{prop:crossbar} extends to the case of capacitated edges
assuming all horizontal edges have the same capacity and all
  vertical edges have the same capacity.
In this case, the requests can be routed iff
  the number of requests for each destination side is bounded by
  total capacity of edges crossing that side.
\end{rem}
\section{The Packet Routing Algorithm}
\label{sec:alg}
\begin{algorithm}[t]
\ifnum\spaa=1
\small
\fi
\caption{Top-level algorithm for packet routing in the $1$-dimensional
  grid. Code for step $t$.}
\label{alg:outline}
\begin{algorithmic}[1]
\State Let $R_t$ be a list of new requests, sorted by
source-destination distance. \label{l0}
\State For each vertex $v$, let $R'_t(v)$ the first $B'+c'$ requests in $R_t$
whose source is $v$.\Comment{filter requests}
\label{line:filter}
  \For{each request $r_i 
    \in\bigcup_v R'_t(v)$}\label{l00}

\If {$r_i \in \near$}{ $\routenear (r_i)$} \label{l1}
\Else
\State Let $j\in\Set{1,\ldots,4}$ be s.t.\  $r_i\in \sw_j$\Comment{classify $r_i$}\label{l2}
\State \label{item:IPP} $sketch_i \gets \IPP(S_j,accepted_j, r_i)$
\label{l5}\Comment{path lengths bounded by $\pmax$}
\State \label{item:init-route} $init_i \gets \initroute(accepted_j, r_i)$ \label{l6}

\If {$sketch_i \neq \text{REJECT}$ \text{ and } $init_i \neq \text{REJECT}$}

\State add $r_i$ to $accepted_j$
\State $\detailedroute(r_i,init_i,sketch_i)$\Comment{update  routes}

\Else {~Reject $r_i$}
\EndIf
\EndIf
\EndFor
\end{algorithmic}
\end{algorithm}

We now present the routing algorithm. Pseudo-code is provided in
\Algref{alg:outline}.  The algorithm works as follows. First, in lines
\ref{l0}-\ref{l00}, an initial filtering of the requests removes requests if too many
requests originate in the same space-time vertex (see~\defref{def:R'}). Then each
remaining new request is processed. In lines \ref{l1}-\ref{l2}, it is classified as
either \near or \far, based on its source-destination distance (see paragraph on
packet classification).  \near requests are routed by the \routenear\ algorithm,
described in \sectionref{sec:near}.
Each \far request is associated with the tiling
$T_j$ in which its source vertex belongs to a south-west quadrant of a tile. Each
tiling is processed separately by three procedures: (i)~The \IPP\ algorithm, which
performs online path packing over the sketch graph $S_j$ (line~\ref{l5}). The outcome
$sketch_i$ is either ``REJECT'' or a path in a sketch graph $S_j$, i.e., a sequence
of tiles from the initial tile to the destination tile.  (ii)~The $\initroute$
procedure looks for a routing within the SW-quadrant of the first tile of $r_i$: its
outcome is either such a path denoted $init_i$ or ``REJECT''. Only $\IPP$ and
$\initroute$ may reject a far request. If both procedures are
successful, then $\detailedroute$ is called (line~\ref{l6}). Detailed routing computes a path in
the space-time graph, i.e., a complete schedule for each packet.  In our algorithm,
the sketch path for each accepted request is computed once and it is fixed, but
the future part of a detailed route of a request may change due to the insertion of new packets.
Therefore, the procedure $\detailedroute$ not only computes a path for $r_i$ in
$G^{st}$, but may also alter the detailed routes of other requests (without changing
the high-level sketch-graph routes).

An important property of $\IPP$ and $\initroute$ is that their state
is determined by the requests that are actually in the system, i.e.,
accepted by both.  (Rejected requests by either do not affect the
state of the system.)  This property enables us to employ the
combination technique of~\cite{KT}.  The listing emphasizes this
property by explicitly managing the sets of accepted requests for
each class (denoted by $accepted_j$). These sets are arguments of
$\IPP$ and $\initroute$ and determine their states. We now proceed to
explain the algorithm in detail.

\paragraph{Packet classification\ifnum\spaa=0.\fi}
A request $r_i=(a_i,b_i,t_i)$ is called \emph{near} if $b_i-a_i \leq \vl$, and
\emph{far} otherwise. We denote the sets of near and far requests by \near\ and \far,
respectively. The far requests are further classified into four classes denoted by
$\far_j$, where $\far_j\DEF \far \cap \sw_j$. Namely, $\far_j$ is the set of far
requests whose source node is in the SW-quadrant of a tile $s$ in the tiling $T_j$.

\paragraph{Link Multiplexing\ifnum\spaa=0.\fi}
After classification, there are five classes of
requests: one for \near and one for each tiling. Routes for each class
are computed independently and the final result is the union of routes
over all classes. This is made possible by partitioning the capacity
of each edge in the space-time
grid into virtual \emph{tracks}, one track per class.  The capacity of each
track is $\frac15$ of the capacity of the edge, rounded
down, i.e., track capacities are $B'\triangleq \lfloor B/5
\rfloor$ and
{$c'\triangleq \lfloor c/5 \rfloor$}.  (This explains why
we require that $B,c\geq 5$.)

\paragraph{Tiling Parameters\ifnum\spaa=0.\fi}
Tile side lengths are set so that the trivial greedy routing algorithm
is $O(\log n)$-competitive for requests that can be satisfied within a
tile. Each tile
has length $\Hl$ and height $\vl$, defined as follows.
Recall that the maximum path length $\pmax= 2n  \cdot (1+\frac{B}{c})$ (cf.~\lemmaref{lemma:nB}).
\begin{defn}\label{def:xy}
We use the following parameters.
\begin{compactitem}
\item $k\triangleq\log(1+ 3\cdot \pmax)$
\item $\Hl=  \lceil\frac{6k}{5c'}\rceil$ and $\vl=\lceil\frac{6k}{5B'}\rceil$
\end{compactitem}
\end{defn}
\noindent
We summarize with the following claim.

\begin{proposition}\label{prop:tiling}
If $B/c$ is bounded by a polynomial in $n$, then the tiling parameters
satisfy the following properties.
  \begin{compactenum}
  \item $\Hl+\vl = O(\log n)$.
  \item The sum of the edge capacities along each tile side is $\Theta(k)$.
  \item For each track, the sum of the track capacities along a tile side is at least $6k$.
 \end{compactenum}
\end{proposition}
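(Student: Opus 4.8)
The plan is to verify each of the three properties of Proposition~\ref{prop:tiling} by direct computation from Definition~\ref{def:xy}, using the hypothesis that $B/c$ is polynomially bounded in $n$. The central quantity is $k=\log(1+3\pmax)$ where $\pmax=2n(1+B/c)$, so first I would establish that $k=\Theta(\log n)$: since $B/c=n^{O(1)}$, we have $\pmax=n^{O(1)}$, hence $k=\log(1+3\pmax)=\Theta(\log n)$. This single estimate drives all three parts.

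For Property~1, I would substitute the definitions $\Hl=\lceil 6k/(5c')\rceil$ and $\vl=\lceil 6k/(5B')\rceil$. Since $c'=\lfloor c/5\rfloor$ and $B'=\lfloor B/5\rfloor$ are both at least $1$ (as $B,c\geq 5$), each ceiling is at most $6k/5+1$, so $\Hl+\vl\leq 12k/5+2=O(k)=O(\log n)$. This is the easy direction; the claim only asserts an upper bound, so no matching lower bound is needed.

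For Property~2, the ``sum of edge capacities along a tile side'' means the total forwarding/storage capacity crossing one side of an $\Hl\times\vl$ tile. A horizontal side of length $\Hl$ is crossed by $\Hl$ edges each of capacity $c$ (the forward edges in $E_0$), giving $\Hl\cdot c$; a vertical side of height $\vl$ is crossed by $\vl$ edges of capacity $B$ (the store edges in $E_1$), giving $\vl\cdot B$. I would show both are $\Theta(k)$: for instance $\Hl\cdot c\approx\frac{6k}{5c'}\cdot c\approx\frac{6k}{5}\cdot\frac{c}{c'}$, and since $c/c'=c/\lfloor c/5\rfloor\in[5,O(1)]$ (bounded because $c\geq 5$), this is $\Theta(k)$. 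The symmetric computation with $B$ and $B'$ handles the vertical side. Care is needed with the ceilings and floors, but the ratios $c/c'$ and $B/B'$ are bounded constants, which keeps everything within a constant factor of $k$.

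For Property~3, the ``track'' capacity is the per-track share, namely $c'$ on a horizontal edge and $B'$ on a vertical edge. Along a side the summed track capacity is $\Hl\cdot c'$ (horizontal) and $\vl\cdot B'$ (vertical). By the very choice $\Hl=\lceil 6k/(5c')\rceil\geq 6k/(5c')$, we get $\Hl\cdot c'\geq 6k/5\geq 6k$ --- here I should double-check the intended constant, since $6k/5<6k$; the natural reading is that the claim wants $\Hl\cdot c'\cdot(\text{side length factor})\geq 6k$ or that the constant in the definition is chosen to yield exactly the threshold $6k$ needed by the path-packing step, so I would confirm the factor of $5$ versus the $1/5$ track fraction cancels correctly against the summation over the side. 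The main obstacle is precisely this bookkeeping: ensuring the ceiling in $\Hl,\vl$ combines with the floor in $c',B'$ so that the summed track capacity clears the threshold $6k$ (which is what the $\route$ algorithm of Theorem~\ref{thm:IPP} needs, since $6k=6\log(1+3\pmax)$ is six times the competitive-load bound), while simultaneously keeping $\Hl+\vl=O(\log n)$ for Property~1. Everything else is routine once $k=\Theta(\log n)$ is in hand.
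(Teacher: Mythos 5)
Your overall route is the same as the paper's: everything is a direct computation from \defref{def:xy} once one notes that $B/c=n^{O(1)}$ gives $\pmax=n^{O(1)}$ and hence $k=\Theta(\log n)$. Your treatment of Properties~1 and~2 matches the paper's proof (which simply asserts $\Hl+\vl=O(k)$ and $\vl\cdot B=\Theta(k)$, handling the horizontal side ``similarly''); the extra care you take with the ceilings and with the bounded ratios $c/c'\le 9$ and $B/B'\le 9$ (valid since $B,c\ge 5$) is exactly the bookkeeping the paper leaves implicit, and it is correct.

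The one place you stop short is Property~3, and the worry you flag about the constant is legitimate rather than a misreading on your part. The paper's own proof writes the relevant sum as $\vl\cdot 5B'\ge 6k$, i.e., it charges $5B'$ per horizontal edge crossing a vertical tile side --- the capacity of all five tracks together (the rounded-down total $5B'\le B$) --- rather than the single-track share $B'$ that you use. Under your literal per-track reading the sum is $\vl\cdot B'\ge 6k/5$, which is precisely the factor-of-$5$ shortfall you noticed; the definition $\vl=\lceil 6k/(5B')\rceil$ is calibrated so that multiplying by $5B'$ clears the threshold $6k$. So to finish the proof you must either adopt the paper's accounting (capacity $5B'$, resp.\ $5c'$, per edge, summed over the side) or settle for the weaker per-track bound $6k/5$. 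The distinction is not cosmetic: the detailed-routing claim later invokes ``track capacity of each quadrant side is at least $3k$'' to absorb up to $3k$ paths of a single class through the NE quadrant, and that figure is $6k/2$ under the paper's accounting but only $3k/5$ under yours. You were right to question the constant; the missing step is committing to, and justifying, which of the two quantities ``the sum of the track capacities along a tile side'' denotes.
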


\begin{proof}
  Clearly $\Hl+\vl=O(k)$. If $B/c$ is polynomial in $n$, then
  $k=O(\log n)$.  The sum of the edge capacities along a vertical side
  is $\vl\cdot B = \Theta(k)$.  The sum of the track capacities
  crossing a vertical side of a tile is at least $\vl \cdot 5B' \geq
  6k$. The capacities along a horizontal edge is bounded similarly.
\end{proof}

\paragraph{Filtering superfluous simultaneous requests with identical sources\ifnum\spaa=0.\fi}
Since we do not impose any restriction on the requests, it could well be that many
requests arrive at the same source vertex in a single time step. To deal with that,
we use that fact that for each node $v$ and step $t$, no more than $c+B$ requests can
leave $(v,t)$ in \emph{any} routing. The partition of link capacities for tracks
imposes a stricter limitation in the sense that within each class, no more than
$c'+B'$ paths can have the same source vertex.
\begin{definition}\label{def:R'}
  Given a sequence $R$ of requests, let $R'$ denote the subsequence of $R$ defined as
  follows.  For each source vertex $(a_i,t_i)$, choose $c'+B'$
  packets whose destination is closest to the source node. (If at most $B'+c'$
  requests originate at the same node, then all of them are kept
  in $R'$.)
\end{definition}
\noindent
\propref{prop:R'} shows that rejecting the requests in $R\setminus R'$ reduces the
fractional optimal throughput only by a constant factor.
\subsection{Routing Rules}
The routing at the high level (sketch path) is determined by the
$\IPP$ algorithm.
We now explain the
ideas behind refining these rough paths (in the sketch graph) into
actual paths (in the space-time grid). Throughout this section we
consider, w.l.o.g.,  a single tiling $T_j$.

Fix a tile $s$ IN $t_J$. We distinguish between the following three types of requests in
$\far_j$ (we deal with \near requests in \sectionref{sec:near}).
\begin{compactitem}
\item \emph{Initial requests:} requests whose source vertex is in the
  south-west
  quadrant of the tile $s$.
\item \emph{Traversing requests:} these are requests that enter $s$ from a
  specific vertex on one side (either west or south) and must leave
  through any vertex of another side (either east or north).  The
  entry vertex is determined by a previously-invoked detailed routing,
  and the exit side is determined by the sketch path.
\item \emph{Final requests:} these are requests whose sketch path ends in
  tile $s$.  The destination of a final request is the north side of
  $s$.%
  \footnote{The detailed path for $r_i$ should end in a copy of the
    destination vertex $b_i$ in $s$. As any path that reaches the
    north side of $s$ also reaches a copy of $b_i$, we use the
    pessimistic assumption that $\row(b_i)$ is the northmost row of
    tile $s$.}
\end{compactitem}

    \begin{figure}
      \centering
        \includegraphics[width=0.35\textwidth]{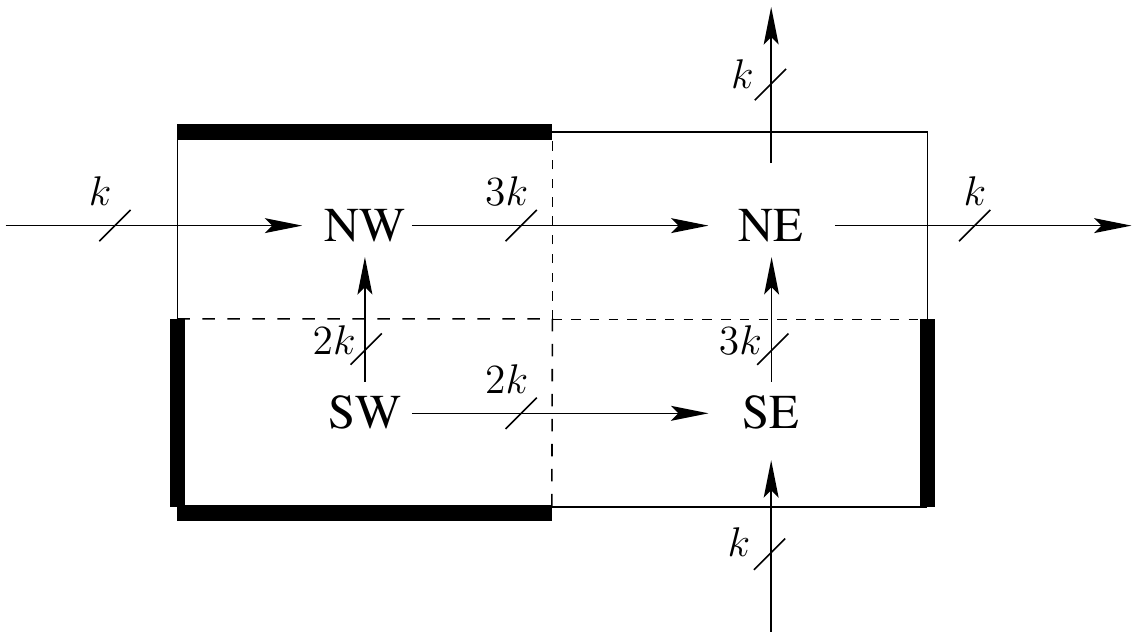}
        \caption{The quadrants of a tile. Traversing requests may not cross the thick
          border lines. Upper bounds on the number of paths in each class that
          traverse the quadrant sides follow from the path packing algorithm \IPP. }
      \label{fig:detail}
    \end{figure}

Each tile is partitioned into $4$ quadrants,
denoted NE, SE, SW and NW.  We constrain the way requests are routed
within a tile
using the following rules (see \figureref{fig:detail}; no request
may cross a thick line).

\begin{compactenum}
\item Initial requests always start in the SW-quadrant and are routed to the north or
  east side of the SW-quadrant along a straight path.  The SW-quadrant
  of each tile is reserved for routing of initial requests.
\item Traversing requests whose source and destination sides are
  opposite (e.g., from the south to the north side) are routed along a
  straight path.
\item Paths enter the tile either in the
  east half of the
  south side or the north half of the west side.
\item Paths exit the tile through either the east half of the north side or the
  north half of the east side.
\end{compactenum}

\subsection{Procedure \initroute}
Initial routing takes place in the SW-quadrant of the first tile of a
\far request.  The goal of \initroute is modest:  route the request
to the boundary of the SW-quadrant. This is done greedily  along
straight paths if possible: it may be the case that some of the edges
are already reserved for another request (also routed by \initroute of
the same tile). If no straight path is available, \initroute returns REJECT.
 This means that incoming traffic (of earlier requests)
continues uninterrupted along a straight path. Only remaining capacity
along edges that emanate from a vertex $(i,j)$, if any, is used for
routing the requests that originate in $(i,j)$. 

\subsection{Procedure \detailedroute}
The goal in detailed routing is to compute a detailed path $p_i$ in
the space-time graph $G^{st}$ given a sketch path $sketch_i$ in the
sketch graph $S_j$ and the initial part of the route $init_i$.  The
sketch path specifies the sequence of tiles to be traversed by the
detailed path.  In addition, the sketch path specifies the tile sides
through which the detailed path should enter and exit each tile. 
Requests that have been
assigned a sketch path and an initial route must be successfully
routed by detailed routing.

Detailed routing is computed by applying crossbar routing (cf.\
\propref{prop:crossbar}) to the NW, SE and NE quadrants. This routing is computed
based on the present requests.  As new requests arrive, the future portions of the
detailed routes may change dynamically so that all requests which are ``in
progress'' will reach their destination. 
Below we argue that crossbar routing indeed succeeds.
\begin{claim}
Detailed routing successfully completes the route of each accepted far request.
\end{claim}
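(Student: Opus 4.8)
The plan is to prove the claim by reducing the detailed routing inside each tile to three separate instances of crossbar routing — one for each of the NE, SE and NW quadrants — and then to certify each instance via \propref{prop:crossbar} (in the capacitated form of the remark following it); the SW quadrant requires no separate argument, since its routing is exactly the output of \initroute.

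First I would fix a tile $s$ and use the embedding (where forwarding is a northward edge and storing is an eastward edge) together with the routing rules to lay out the three crossbar problems. South-entering traffic is confined to the south side of SE, west-entering traffic to the west side of NW, and initial requests leave the SW quadrant either northward (into NW) or eastward (into SE); all traffic then converges in NE and leaves through its north side (traversing and final requests) or its east side (traversing requests). This fixes, for each quadrant, which sides act as sources and which as sinks: in SE every path must reach the north side, in NW every path must reach the east side, and in NE the traffic arriving from SE and NW splits between the north and east sides.

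The core of the argument is to bound, for each quadrant side, the number of paths that must cross it and to compare this with the side's track capacity. I would combine three ingredients. (i) Since the sketch path of every accepted request is produced by \IPP, which computes a $\beta$-packing on the unit-capacity sketch edges (\theoremref{thm:IPP}), the number of detailed paths crossing any single tile boundary is bounded by the sketch load; by the routing rules this bounds the south-entering, west-entering, north-exiting and east-exiting paths, and, via flow conservation within $s$, the number of final requests. (ii) The initial requests leaving the SW quadrant on each side are bounded by \initroute, which routes them greedily along straight lines and rejects once the reserved capacity of a straight track is exhausted; together with the filtering of \defref{def:R'} this caps the initial contribution to each quadrant side. (iii) The tiling parameters of \defref{def:xy}, via \propref{prop:tiling}, are chosen precisely so that the track capacity along every quadrant side dominates the sum of the sketch-load bound from (i) and the initial bound from (ii). Assembling these per side shows that, in each of NE, SE and NW, the number of paths destined for each side is at most that side's capacity, so \propref{prop:crossbar} yields a feasible packing.

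The step I expect to be the main obstacle is this per-side bookkeeping in (iii): the bounds are essentially tight, so the classification of paths into quadrant sides must be exhaustive and the counts for initial, traversing and final requests must be combined without overlap and matched exactly against the engineered capacities of \defref{def:xy}. A secondary point requiring care is the dynamic nature of \detailedroute: since routes are recomputed at each step from the currently present requests, I would argue that the same counting bounds hold for the remaining (future) portion of every accepted request, so that \propref{prop:crossbar} can be reinvoked at each step while the already-executed prefixes are left unchanged — which is exactly what is needed to guarantee that every accepted \far request still reaches its destination.
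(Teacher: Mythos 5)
Your high-level plan is the paper's plan: reduce the work inside a tile to crossbar-routing instances on the NW, SE and NE quadrants, bound the number of paths destined for each quadrant side, and invoke \propref{prop:crossbar} against the $3k$ track capacity guaranteed by \propref{prop:tiling}. The treatment of traversing and final requests is also right (at most $k$ per tile side from the $k$-packing of \IPP, and at most $2k$ final requests since each must have entered the tile). But there is a genuine gap in your ingredient~(ii): you cap the initial requests leaving the SW-quadrant by ``\initroute\ rejects once the reserved capacity of a straight track is exhausted, together with the filtering of \defref{def:R'}.'' That cap is the full track capacity of the SW-quadrant's boundary --- itself of order $3k$ per side --- and the per-vertex filter of \defref{def:R'} allows far more than $k$ requests in a quadrant. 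With that bound, the sums you form in step~(iii) do not fit: the east side of NW (capacity $3k$) must carry up to $k$ west-entering traversing paths \emph{plus} all initial paths that left SW northward, and if the latter is bounded only by the SW boundary capacity the total exceeds $3k$. The bounds here are tight, so this is not slack you can absorb.

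The missing idea is that the bound on accepted initial requests comes from \IPP, not from \initroute\ or the filtering. Every accepted initial request is accepted by \IPP\ as well; since it is a far request, its sketch path must leave the source tile, and \IPP\ maintains a $k$-packing on the two unit-capacity sketch edges leaving that tile. Hence at most $2k$ initial requests are accepted per tile, so at most $2k$ initial paths cross any side of the SW-quadrant. This is exactly what makes the accounting close: $k + 2k \le 3k$ on the east side of NW and the north side of SE, and $k$ exiting plus $2k$ final $\le 3k$ on the north side of NE. You already use the analogous observation (``each final request must have entered the tile'') for the final requests; you need to apply the same \IPP-based counting to the initial requests rather than relying on \initroute's greedy capacity limit.
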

\begin{proof}
By \propref{prop:crossbar}, to ensure successful routing
it is sufficient to bound the number of paths
that need to traverse a quadrant by the capacity of the
quadrant side.
  By~\propref{prop:tiling}, the track capacity of each quadrant side
  is at least $3k$.  We now prove upper bounds on the number of paths
  that traverse each quadrant side (see~\figref{fig:detail}).  The
  \IPP\ path packing algorithm is a $k$-packing over the sketch graph
  (whose edges have unit capacity). It follows that at most $k$ paths
  traverse each side of the tile.  As every request that originates in
  the SW-quadrant of a tile must exit the tile, there are at most $2k$
  paths that traverse each side of the SW-quadrant (although their sum
  is also bounded by $2k$).  Hence the upper bounds depicted
  in~\figref{fig:detail} follow. We need to elaborate more on the
  NE-quadrant because it is also used for routing final requests
  (i.e., requests that do not exit the tile, but do want to reach its
  top row). Consider the north side of the NE-quadrant. There are at
  most $k$ traversing requests that wish to exit the tile. In
  addition, there are at most $2k$ final requests that wish to reach
  to top row (as each final far request must have entered the
  tile). Thus, in total there are at most $3k$ paths that wish to
  reach the top side of the NE-quadrant. To summarize, the number of
  paths that wish to reach any quadrant side is bounded by the side's
  capacity, and hence by~\propref{prop:crossbar}, detailed routing
  succeeds.
\end{proof}

Finally, we note that in order for \detailedroute\ to be well defined, we compute it
in tiles in column-major order, i.e., we start with the bottom tile of the leftmost
row and go up, then the bottom tile of the second-from left column and go up etc.
This ensures that when we reach a tile, all input vertices are fixed. We remark that
detailed routing can be executed in a local distributed manner; in each time step,
each vertex needs only to know the initial paths the sketch paths of the incoming
packets.

\subsection{Procedure \routenear}\label{sec:near}

Finally, we describe the algorithm for the near requests.  The
\routenear\ Algorithm is extremely simple: it never stores a packet
(i.e., it uses only vertical edges in $G^{st}$, and gives precedence
to older requests). In more detail, upon arrival of a request $r_i\in
\near$, the algorithm checks the number of requests already routed
along the outgoing vertical edge (from $(a_i,t)$ to $(a_i+1,t+1)$). If
this number is less than $c'$, then the algorithm routes $r_i$ along
the vertical path in $G^{st}$ from $(a_i,t)$ to
$(b_i,t+(b_i-a_i))$. Note that these edges occur in the future, and
hence cannot have been saturated by \routenear if the edge outgoing
from $(a_i,t)$ is not saturated.  If there is no free capacity in the
outgoing vertical edge, $r_i$ is rejected. Note that if $r_j$ is
accepted, then it guaranteed to reach its destination.

\section{Analysis of Competitive Ratio of the Routing Algorithm}
\label{sec:analysis}

Our goal is to prove the following theorem for a directed line $G$ of $n$ vertices with
buffer sizes $B$ and link capacities $c$, where $B,c\in[5,\log n]$.
\begin{theorem}\label{thm:main}
  \Algref{alg:outline} is $O(\log n)$-competitive with respect to the throughput of a
  maximum fractional routing.
\end{theorem}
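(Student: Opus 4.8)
The plan is to decompose the competitive ratio of the full algorithm into contributions from the five independent request classes (the \near class and the four \far classes $\far_1,\ldots,\far_4$), and to bound each contribution separately against the fractional optimum. Since every accepted request is guaranteed to be delivered (as established for \routenear, \initroute\ and \detailedroute), it suffices to lower-bound the number of accepted requests. The overall structure I would follow is: (1) reduce from the true fractional optimum $|f^*(R)|$ to the bounded-length optimum $|f^*(R\mid\pmax)|$ at a constant-factor loss via \lemmaref{lemma:nB}; (2) reduce further to the filtered request set $R'$ at a constant-factor loss via \propref{prop:R'}; (3) split the filtered optimum over the five classes, losing at most another constant factor since the five optima sum to at least (a constant times) the joint optimum; and (4) bound the throughput of the algorithm on each individual class by $O(\log n)$ times that class's fractional optimum.

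For the per-class analysis of a \far\ class, the key engine is \theoremref{thm:IPP}: the \IPP\ algorithm run on the sketch graph $S_j$ is $(2,k)$-competitive online integral path packing, where $k=\log(1+3\pmax)=O(\log n)$, and its integral throughput is at least half the fractional optimum for path requests in $S_j$. I would argue that a fractional packing of the original bounded-length requests projects to a fractional packing on the sketch graph (each space-time path induces a sketch path of tiles, and the $\Theta(k)$ capacity between adjacent tiles, guaranteed by \propref{prop:tiling}, absorbs the load), so the sketch-graph fractional optimum is within a constant factor of the space-time fractional optimum. Then \theoremref{thm:IPP} delivers a $k$-packing of sketch paths with throughput $\Omega(1)\cdot|f^*|$. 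Crucially, \propref{prop:tiling} guarantees each quadrant side has track capacity at least $3k$, and the earlier \claimref{} shows crossbar routing (\propref{prop:crossbar}) never fails to expand these sketch paths into real space-time paths — so no accepted request is lost in detailed routing. The \routenear\ analysis is separate and simpler: near requests travel straight-line with no storage, and a standard greedy/packing argument on vertical edges of capacity $c'$ gives the $O(\log n)$ bound, using $\vl=O(\log n)$ to control how many near requests can compete.

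The main subtlety, and where I would concentrate the work, is the use of the Kleinberg--Tardos combination technique to handle the \emph{two} gatekeepers \IPP\ and \initroute\ that can independently reject a \far\ request. Because a request is accepted only if \emph{both} succeed, and — by the design emphasized in the algorithm description — the state of each procedure depends only on requests accepted by \emph{both}, the combined rejection behaves additively rather than multiplicatively. I would invoke \cite{KT} to show that the competitive ratio of the conjunction is the \emph{sum} of the ratios of the \IPP\ stage and the \initroute\ stage, not their product. This requires verifying the hypotheses of the combination lemma: that each constituent algorithm is competitive on its own and that neither procedure's internal accounting is corrupted by requests the other rejected. The \initroute\ stage must be shown to reject only a constant fraction of what a bounded fractional optimum could route through the reserved SW-quadrant, which follows from its straight-path greedy behavior together with the filtering bound of $B'+c'$ requests per source vertex from \defref{def:R'}.

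Assembling the pieces, each \far\ class contributes competitive ratio $O(k)+O(1)=O(\log n)$, the \near\ class contributes $O(\log n)$, and summing the five classes (five being a constant, and the track-capacity factor of $5$ already absorbed into $B',c'$) yields the claimed $O(\log n)$ bound against $|f^*(R)|$. The hardest step to execute carefully is the faithful transfer of fractional optimality through the chain of reductions — bounded length, filtering, class split, and sketch-graph projection — while ensuring the combination technique's independence hypothesis is genuinely met by the explicit $accepted_j$ bookkeeping; if any one of these reductions silently incurs a super-constant loss, the final bound degrades, so each must be pinned to an explicit constant.
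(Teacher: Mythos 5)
Your proposal follows essentially the same route as the paper's proof: the chain of constant-factor reductions (bounded path lengths via \lemmaref{lemma:nB}, filtering via \propref{prop:R'}, the five-way class split by subadditivity, and the $\Theta(k)$ sketch-graph capacity scaling of \propref{prop:scale}), the $(2,k)$-competitiveness of \IPP\ from \theoremref{thm:IPP}, the Kleinberg--Tardos combination of \IPP\ and \initroute\ enabled by the shared $accepted_j$ state, the guarantee that \detailedroute\ never fails, and the separate $O(\log n)$ counting argument for \routenear. The only cosmetic difference is that you apply the bounded-length reduction globally up front whereas the paper folds it into the analysis of the far classes (\claimref{claim:IPP}); the argument is otherwise the paper's own.
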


We translate the problem to a path packing problem over the space-time graph $\Gst$.
Let $f^*_{\Gst}(R)$ denote a maximum throughput fractional  routing, and let
$|f^*_{\Gst}(R)|$ denote its throughput. Let $|\alg(R)|$ denote the throughput of the
online packet algorithm. \theoremref{thm:main} follows directly from the following
lemma.

\begin{lemma}\label{lemma:frac cr} For every sequence of requests $R$,
  $|f^*_{\Gst}(R)|\leq O(\log n) \cdot |\alg(R)|$.
\end{lemma}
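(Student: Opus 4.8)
The plan is to bound $|f^*_{\Gst}(R)|$ by chaining together the constant-factor losses established earlier, and then to account for the competitive ratio contributed by each of the five request classes using the combination technique of Kleinberg and Tardos. First I would reduce the fractional optimum on the unbounded space-time graph to one with bounded path lengths: by \lemmaref{lemma:nB}, restricting to paths of length at most $\pmax$ loses only a constant factor, so $|f^*_{\Gst}(R)| = O(1)\cdot |f^*(R\mid \pmax)|$. Next I would invoke \propref{prop:R'} to pass from $R$ to the filtered sequence $R'$, again losing only a constant factor, so it suffices to compare $|\alg(R')|$ against the bounded-length fractional optimum of $R'$. These two steps isolate the heart of the matter: comparing the algorithm's integral throughput against a bounded-length fractional $1$-packing.

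Next I would split the fractional optimum according to the five classes. Any fractional $1$-packing can be decomposed so that its flow on each class respects the corresponding track capacities (which are a constant fraction of the full capacities), so I would argue that $|f^*(R'\mid\pmax)|$ is at most a constant times the sum, over the four \far-tilings and the \near class, of the per-class bounded-length fractional optima restricted to that class's tracks. Scaling capacities up and down here is exactly where the Buchbinder--Naor fractional guarantee from \theoremref{thm:IPP} is essential, since it controls how throughput changes under capacity scaling. For each far class $\far_j$, the sketch-graph path packing by \IPP\ on $S_j$ is $(2,k)$-competitive against the fractional $1$-packing on the sketch graph, and \propref{prop:tiling} guarantees the track capacities on each tile side exceed the $k$-packing load, so that the already-proved claim (detailed routing always succeeds) certifies that every request accepted at the sketch level is actually delivered. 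I would also argue that a fractional routing in the space-time grid projects to a fractional routing in the sketch graph with comparable throughput, so the sketch-level fractional optimum dominates a constant fraction of the class's true fractional optimum.

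The near class is handled separately and more directly: \routenear\ never stores and gives precedence to older requests, so I would compare it to a greedy argument, bounding its competitive contribution by the per-edge track capacity $c'$ and the fact that tile height $\vl = O(\log n)$, yielding an $O(\log n)$ ratio for near requests. Finally, I would assemble everything via the combination technique: since the states of \IPP\ and \initroute\ depend only on the jointly-accepted requests, the combined per-class ratio is the \emph{sum} of the constituent ratios (from \IPP, giving $\alpha=2$ scaled by the $O(\log n)$ factors hidden in $k$ and the tiling, and from \initroute), not their product. Summing the five $O(\log n)$ contributions and folding in the $O(1)$ losses from the reductions yields $|f^*_{\Gst}(R)| \le O(\log n)\cdot|\alg(R)|$.

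The main obstacle I expect is the capacity-scaling bookkeeping in the middle step. It is delicate to show that partitioning each edge's capacity into five constant-fraction tracks, combined with the tiling coarsening into unit-capacity sketch edges, costs only a constant factor rather than compounding. Making this rigorous requires the fractional guarantee of \theoremref{thm:IPP} (the Buchbinder--Naor analysis against the fractional optimum) precisely so that the losses at each scaling stay multiplicative constants that can be absorbed; verifying that the bounded-length fractional optimum on the original grid is genuinely dominated by the sum of the per-class, per-track sketch-level optima—without losing more than a constant—is the crux of the argument, and is where \propref{prop:tiling}(3) and the $k$-packing property of \IPP\ must be combined carefully.
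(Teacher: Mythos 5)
Your proposal is correct and follows essentially the same route as the paper: split $R'$ into the \near class and the four $\far_j$ classes by subadditivity, absorb the filtering (\propref{prop:R'}) and bounded-path-length (\lemmaref{lemma:nB}) losses as constants, pay the $\Theta(k)=\Theta(\log n)$ factor once in passing to the unit-capacity sketch graph (\propref{prop:scale}), combine \IPP\ and \initroute\ via the Kleinberg--Tardos argument so their ratios add, and handle \near separately with \theoremref{thm:near}. The only differences are cosmetic orderings (the paper applies the $\pmax$ restriction inside \claimref{claim:IPP} rather than up front, and handles the factor-of-five track division by linearity of the fractional optimum, folded into the $\Theta(k)$ scaling, rather than as a separate decomposition step).
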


We outline the proof of~\lemmaref{lemma:frac cr}.  We scale the capacities down by a
factor of $\Theta(k)=\Theta(\log n)$ in the sketch graph.  By linearity, this reduces
the optimal fractional throughput by the same factor (see~\propref{prop:scale}).  We
show that the filtering stage in~\lineref{line:filter} incurs only a constant factor
reduction to the optimal fractional throughput (see \propref{prop:R'}).  The filtered
requests $R'$ are partitioned into near requests and far requests (which are further
partitioned into $4$ classes, one per tiling). The far and near requests are analyzed
separately.  The analysis of the throughput for far requests builds on the competitive
ratio of the \IPP\ algorithm and the \initroute\ algorithm (see~\claimref{claim:IPP}
and~\claimref{claim:init}). By applying the combining analysis of Kleinberg and
Tardos~\cite{KT}, we show that the competitive ratio for the combined algorithm is
the sum of the two algorithms (see~\claimref{claim:combine}).
In~\theoremref{thm:near}, we show that the \routenear\ algorithm succeeds in routing
a logarithmic fraction of the filtered near requests.  In~\sectionref{sec:ptt}, the
parts of the proof are combined together to prove~\lemmaref{lemma:frac cr}.

\subsection{Scaling and Filtering}
One advantage of working with fractional routings is that, by linearity, the
throughput scales exactly with the capacities.  Let $f^*_{S_j}(R)$ denote a maximum
throughput fractional routing in the sketch graph $S_j$.  Recall that the sketch
graph has unit capacities. Coalescing of vertices of $G^st$ in each tile results with
edge capacities that are $\Theta(k)=\Theta(\log n)$. Hence, we obtain the following
proposition.
\begin{proposition}\label{prop:scale}
  $|f^*_{S_j}(R)| = \frac{1}{\Theta(k)} \cdot |f^*_{\Gst}(R)|$.
\end{proposition}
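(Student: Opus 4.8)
The plan is to realize the sketch graph $S_j$ as a two-stage modification of the space-time grid $\Gst$: first \emph{contracting} each tile to a single vertex, and then \emph{rescaling} the resulting inter-tile capacities down to unit. The equality then follows by combining a throughput-preservation statement for the contraction with the linearity of fractional routing for the rescaling.

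First I would introduce the intermediate capacitated graph $\hat S_j$ obtained from $\Gst$ by contracting every tile of $T_j$ to a single node; an edge of $\hat S_j$ between two adjacent tiles is assigned capacity equal to the total $\Gst$-capacity of the edges crossing the corresponding tile side, which by \propref{prop:tiling} equals $\Theta(k)$. The graph $S_j$ is then exactly $\hat S_j$ with all capacities replaced by unit. Since every side capacity lies in a fixed constant-factor window around $k$, scaling each flow value of a feasible routing by $1/\Theta(k)$ turns a feasible routing in $\hat S_j$ into a feasible routing in $S_j$ and conversely, with throughput scaled by the same factor; hence $|f^*_{S_j}(R)| = \frac{1}{\Theta(k)}\cdot|f^*_{\hat S_j}(R)|$. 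It therefore suffices to prove $|f^*_{\hat S_j}(R)| = |f^*_{\Gst}(R)|$.

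For the inequality $|f^*_{\hat S_j}(R)| \ge |f^*_{\Gst}(R)|$ I would \emph{project} an optimal fractional routing of $\Gst$ onto $\hat S_j$: each flow path is mapped to the sequence of tiles it visits. In the axis-parallel embedding the forward and store edges are vertical and horizontal, so every path is monotone and crosses each tile side at most once; consequently the flow assigned to an inter-tile edge of $\hat S_j$ equals the flow crossing the corresponding side in $\Gst$, which is at most that side's capacity. The projected routing is thus feasible and serves the same total flow. For the reverse inequality $|f^*_{\hat S_j}(R)| \le |f^*_{\Gst}(R)|$ I would \emph{lift} an optimal routing of $\hat S_j$ back into $\Gst$: the flow on each inter-tile edge prescribes how much flow must cross each tile side, and the capacitated version of crossbar routing (the remark following \propref{prop:crossbar}) guarantees that these boundary demands can be realized by internal tile-routings that respect the tile's own horizontal and vertical edge capacities, since for every tile the prescribed side-crossing flows are bounded by the corresponding side capacities. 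Stitching these internal routings together yields a feasible $\Gst$-routing of equal throughput, establishing the equality.

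The main obstacle I expect is the lifting direction: one must verify that the within-tile boundary demands imposed by a feasible $\hat S_j$-routing actually meet the hypotheses of capacitated crossbar routing (that the total flow requesting each destination side does not exceed that side's capacity), and that the fractional per-tile routings are mutually consistent at shared sides so they compose into a single global routing. The projection direction, by contrast, is essentially bookkeeping once path monotonicity is observed, and the rescaling is immediate from the linearity of the defining linear program.
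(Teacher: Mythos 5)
Your contraction-plus-rescaling framework is essentially the paper's own (very terse) argument, and the half of it that the analysis actually uses is correct: projecting a feasible fractional routing of $\Gst$ onto the contracted graph $\hat S_j$ only removes constraints (monotonicity of paths in the axis-parallel embedding takes care of the ``crosses each side at most once'' bookkeeping), and scaling the $\Theta(k)$ side capacities down to $1$ divides throughput by $\Theta(k)$ while preserving the per-request demand bound $\sum_j\lambda_j\le 1$. This gives $|f^*_{S_j}(R)|\ge \frac{1}{\Theta(k)}\cdot|f^*_{\Gst}(R)|$, which is the only direction invoked later (it is combined with \claimref{claim:combine} to bound $|f^*_{\Gst}(R'_j)|$ by $O(\log n)\cdot|\alg(R'_j)|$).

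The gap is in your lifting direction, $|f^*_{\hat S_j}(R)|\le|f^*_{\Gst}(R)|$, and it is not a repairable technicality: that inequality is false in general. Contracting a tile erases \emph{all} intra-tile capacity constraints, not only the ones crossbar routing can absorb. Concretely, place $\Theta(k)$ far requests at a single space-time vertex: $\hat S_j$ routes $\Theta(k)=\Theta(\log n)$ units of flow out of that tile, since its only constraints are the $\Theta(k)$ side capacities, whereas in $\Gst$ at most $B+c$ units can leave any single vertex (one forward edge of capacity $c$ plus one store edge of capacity $B$), and $B+c$ may be as small as $10$. \propref{prop:crossbar} and its capacitated remark cover only \emph{traversing} requests that enter at boundary vertices with whole sides as destinations; they say nothing about flow that must first escape from interior source vertices or reach an interior destination row, which is exactly where this counterexample lives. (Your reverse rescaling step has a second, smaller flaw: multiplying a routing of $S_j$ by $\Theta(k)$ can push a request's flow amount above its demand of $1$.) The remedy is to drop the converse entirely and state only the one-sided inequality, which is all the paper's proof --- coalescing yields inter-tile capacities $\Theta(k)$, the sketch graph has unit capacities, and fractional throughput scales linearly with capacities --- actually establishes and all the downstream analysis needs.
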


\noindent
Recall that by~\defref{def:R'}, in the input sequence $R'$, at most $B'+c'$ requests
originate in each space-time vertex.
\begin{proposition}\label{prop:R'}
$   |f^*_{\Gst}(R)| \leq 9 \cdot  |f^*_{\Gst}(R')|$.
\end{proposition}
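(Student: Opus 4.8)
The plan is to transform an optimal fractional $1$-packing of $R$ in $\Gst$ into a fractional routing that uses only requests of $R'$, while losing at most a factor of $9$ in throughput. Let $f$ be a maximum-throughput fractional $1$-packing of $R$, so $|f| = |f^*_{\Gst}(R)|$. Since every request has a unique source vertex, I would write $|f| = \sum_{(v,t)} F(v,t)$, where $F(v,t)$ is the total flow amount that $f$ assigns to requests originating at the space-time vertex $(v,t)$. The only edges leaving $(v,t)$ in $\Gst$ are the forward edge (capacity $c$) and the store edge (capacity $B$), so $F(v,t) \le c + B$ for every $(v,t)$, exactly the ``at most $c+B$ requests leave $(v,t)$'' observation noted before \defref{def:R'}.

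Next I would redistribute, at each source vertex independently, the flow $F(v,t)$ among the requests of $R'$ that originate at $(v,t)$, i.e.\ the (at most) $c'+B'$ requests with the nearest destinations. The key structural fact is monotonicity: in the axis-parallel space-time grid every legal path is non-decreasing in the row (= vertex) index, so a path serving a request with a far destination $b_i$ necessarily passes through the row of any nearer destination $b_j$ with $a_i < b_j < b_i$. Hence any unit of flow leaving $(v,t)$ can be re-assigned to a nearer request by truncating its path at the nearer destination row. Because truncation only deletes edges, the reassigned flow is edge-wise dominated by $f$'s flow out of $(v,t)$; summing over all source vertices, the load induced on every edge is at most its load under $f$, so the redistributed routing $g$ is again a valid $1$-packing, now of a sub-multiset of $R'$.

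It remains to lower-bound $|g|$. At each source vertex I pour $F(v,t)$ into the nearest $R'$-requests, filling each up to its per-request cap of $1$. If fewer than $c'+B'$ requests originate at $(v,t)$, then $R'$ retains all of them and no flow is lost; otherwise I can absorb $\min\{F(v,t),\,c'+B'\}$ units. Using $c + B \le 9\,(c'+B')$, which follows from $c/\lfloor c/5\rfloor \le 9$ and $B/\lfloor B/5\rfloor \le 9$ (each ratio is worst at argument $9$) together with the mediant inequality $\tfrac{c+B}{c'+B'} \le \max\{\tfrac{c}{c'},\tfrac{B}{B'}\}$, I get $c'+B' \ge (c+B)/9 \ge F(v,t)/9$, so the absorbed flow is at least $F(v,t)/9$ in either case. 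Summing over all $(v,t)$ gives $|g| \ge \tfrac19 \sum_{(v,t)} F(v,t) = \tfrac19 |f^*_{\Gst}(R)|$, and since $g$ is a feasible fractional routing of $R'$, I conclude $|f^*_{\Gst}(R')| \ge |g| \ge \tfrac19 |f^*_{\Gst}(R)|$.

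The main obstacle I expect is the feasibility bookkeeping of the simultaneous redistribution: one must verify that pouring flow from many far requests into the few nearest ones, across all source vertices at once, never overloads an edge. The argument above handles this by treating the operation as passing to a truncated sub-flow of $f$ restricted to each source, so that edge loads can only decrease; making this flow-decomposition precise (rather than arguing request-by-request) is the step that needs the most care. Pinning down the exact constant $9$ is then a routine consequence of the floor bounds defining $c'$ and $B'$.
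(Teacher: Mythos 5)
Your proposal is correct and follows essentially the same route as the paper's proof: at each space-time source vertex, divert the (at most $B+c$ units of) outgoing flow from the discarded requests to the $B'+c'$ retained nearest-destination requests along truncated paths, and use $B+c\le 9(B'+c')$ to bound the loss. The only cosmetic difference is that the paper scales the whole flow by $1/9$ up front and then diverts, while you divert first and cap at $\min\{F(v,t),\,c'+B'\}$; both yield the same factor of $9$.
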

\begin{proof}
  Fix a space-time vertex $v=(a_i,t_i)$.  Let $R_v$ (reps. $R'_v$) denote requests in
  $R$ (reps. $R'$) that originate in $v$. Let $f=f^*_{\Gst}(R)$. Consider the flow $f/9$.  For
  every vertex $v$, the amount of flow that originates in $v$ is bounded by $(B+c)/9
  \leq B'+c'$.  Divert flow from in $f/9$ from $R_v\setminus R'_v$ to $R'_v$ along shorter
  paths, to obtain a flow $g$ with respect to $R'$ such that $|g| = |f|/9$. Since
  $|f^*_{\Gst}(R') |\geq |g|$, the proposition follows.
\end{proof}

\subsection{Far Requests}
Two algorithms determine whether a far request is rejected: (i)~the \IPP\ path
packing algorithm over the sketch graph, and (ii)~the \initroute\ algorithm that
deals with routing in the initial SW-quadrant of the source tile. We begin by showing
that, if invoked separately, each of these algorithms accepts at least a constant
fraction of the maximum fractional throughout over the sketch graph.

Let $R'_j$ denote the subsequence of requests in $R'$ that are in the class $\far_j$.
Suppose we invoke the \IPP\ algorithm in isolation over the sketch graph $S_j$ with
the input sequence $R'_j$. By isolation we mean that the accepted requests are
determined solely by $\IPP$. Let $|\IPP_{S_j}(R'_j)|$ denote the number of requests
that are accepted by this invocation.
\begin{claim}\label{claim:IPP}
$|f^*_{S_j}(R'_j)| \leq \frac{2}{0.31} \cdot  |\IPP_{S_j}(R'_j)| $.
\end{claim}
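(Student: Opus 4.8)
The plan is to combine two results already available in the excerpt: the competitive-ratio guarantee of the \IPP\ algorithm from \theoremref{thm:IPP}, and the bounded-path-length lemma \lemmaref{lemma:nB} via \theoremref{thm:IPP}'s guarantee against the fractional optimum. The claim to prove is
\[
|f^*_{S_j}(R'_j)| \leq \frac{2}{0.31} \cdot |\IPP_{S_j}(R'_j)|,
\]
so I need to show that the isolated \IPP\ invocation over $S_j$ accepts at least $0.31/2$ times the maximum fractional $1$-packing throughput on $S_j$. The factor $2$ clearly comes from the $(2,\cdot)$-competitiveness of \IPP, while the factor $0.31$ must come from the length-restriction loss in \lemmaref{lemma:nB}.

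First I would invoke \theoremref{thm:IPP}: since the sketch graph $S_j$ has unit edge capacities, $\inf_e c(e) = 1 \geq 1$, and the algorithm \IPP\ is run with the path-length bound $\pmax$ built into the set $P(r_i)$ of legal paths (as indicated in \Algref{alg:outline}, line \ref{l5}). The "Moreover" clause of \theoremref{thm:IPP} states that the throughput of \IPP\ is at least $1/2$ the throughput of \emph{any} fractional packing for that request sequence; in particular, at least $1/2$ the maximum fractional $1$-packing \emph{subject to the same length bound}. Thus
\[
|\IPP_{S_j}(R'_j)| \geq \tfrac{1}{2}\,|f^*_{S_j}(R'_j \mid \pmax)|,
\]
where $f^*_{S_j}(R'_j \mid \pmax)$ is the length-restricted maximum fractional packing on $S_j$.

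Second I would close the gap between the length-restricted and unrestricted fractional optima using \lemmaref{lemma:nB}, which states $|f^*(R|\pmax)| > 0.31\cdot|f^*(R)|$ for the choice $\pmax = 2n(1+B/c)$. Applying this to the request sequence $R'_j$ on $S_j$ gives $|f^*_{S_j}(R'_j \mid \pmax)| > 0.31\cdot|f^*_{S_j}(R'_j)|$. Chaining the two inequalities yields
\[
|\IPP_{S_j}(R'_j)| \geq \tfrac{1}{2}\,|f^*_{S_j}(R'_j \mid \pmax)| > \tfrac{0.31}{2}\,|f^*_{S_j}(R'_j)|,
\]
which rearranges to the claimed bound.

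The main obstacle I anticipate is justifying that \lemmaref{lemma:nB}, which is stated for the original space-time packet-routing instance, transfers cleanly to the coarsened sketch graph $S_j$ and to its restricted request subsequence $R'_j$. I would need to argue that the length bound $\pmax$ on detailed paths in $G^{st}$ corresponds to a legal path-length bound on the sketch graph (each sketch path visits a subsequence of tiles, and a bounded-length detailed path induces a bounded-length sketch path), and that the fractional optima on $S_j$ inherit the constant-factor relationship of \lemmaref{lemma:nB} — either by appealing to the scaling argument of \propref{prop:scale} and the one-to-one path correspondence of \sectionref{sec:reduction}, or by noting that \lemmaref{lemma:nB} is a generic statement about store-step restrictions that applies verbatim once the problem is phrased as path packing. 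Everything else is a direct substitution.
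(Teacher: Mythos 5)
Your proposal is correct and follows essentially the same two-step argument as the paper: apply \lemmaref{lemma:nB} to lose only the $0.31$ factor from restricting path lengths to $\pmax$, then apply the $(2,k)$-competitiveness of \IPP\ from \theoremref{thm:IPP} (in its fractional form) to lose the factor $2$. The transfer of \lemmaref{lemma:nB} to the sketch graph that you flag as a potential obstacle is applied without further comment in the paper as well, so your treatment is, if anything, more careful.
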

\begin{proof}
  By \lemmaref{lemma:nB}, the restriction of the path lengths by $\pmax$ only reduces
  the fractional throughput by a factor less than $0.31$.  By \theoremref{thm:IPP}, the \IPP\
  algorithm is $(2,k)$-competitive, and hence its throughput is half the optimal
  fractional throughput with bounded path lengths.
\end{proof}

Let $|\initroute(R'_j)|$ denote the number of requests
that are accepted by $\initroute$ if invoked in isolation with
the input sequence $R'_j$.
\begin{claim}\label{claim:init}
$|f^*_{S_j}(R'_j)| \leq 2 \cdot |\initroute(R'_j)| $.
\end{claim}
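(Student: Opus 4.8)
The plan is to prove the claim by a per-tile accounting argument. For each tile $s$ of the sketch graph $S_j$, let $A_s$ denote the number of requests that \initroute\ (run in isolation on $R'_j$) accepts out of the SW-quadrant of $s$, and let $F_s$ denote the amount of optimal fractional flow of $f^*_{S_j}(R'_j)$ that originates at $s$. Since every request is charged to the unique tile containing its source, we have $|f^*_{S_j}(R'_j)| = \sum_s F_s$ and $|\initroute(R'_j)| = \sum_s A_s$, so it suffices to establish $A_s \geq F_s/2$ for every tile $s$ and then sum.

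First I would bound the optimal flow originating at a tile. In the sketch graph $S_j$ every tile has exactly two outgoing edges---to its north neighbor and to its east neighbor---each of capacity $1$. Every request in $R'_j$ is \far, so since a tile has height $\vl$ and the source lies in the SW-quadrant, the destination row lies strictly above $s$; hence the destination tile differs from $s$, and by flow conservation (no request of $R'_j$ terminates at $s$) all of its flow must leave $s$ through one of the two unit-capacity outgoing edges. Consequently $F_s \leq 2$.

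Next I would lower-bound $A_s$ by a short case analysis. If \initroute\ accepts every request originating in the SW-quadrant of $s$, then $A_s$ equals the number $n_s$ of such requests, and since each request supplies at most one unit of flow, $F_s \leq n_s = A_s$. Otherwise \initroute\ rejects some request $r$ at a vertex $(i,j)$ of the SW-quadrant; by the rule that through-traffic proceeds uninterrupted along straight paths and only residual capacity is offered to newly originating requests, a rejection means the straight exit $r$ attempts is blocked by a saturated bottleneck edge. That edge carries a full track's worth of straight paths---at least $\min(c',B') \geq 1$ of them, recalling $c' = \lfloor c/5\rfloor$ and $B' = \lfloor B/5\rfloor$ are at least $1$ since $B,c \geq 5$---all of which originate in the SW-quadrant of $s$ and are therefore counted in $A_s$; thus $A_s \geq 1$. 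Combining the two cases with $F_s \leq 2$ gives $A_s \geq F_s/2$ in all cases, and summing over tiles yields $|f^*_{S_j}(R'_j)| = \sum_s F_s \leq 2\sum_s A_s = 2\,|\initroute(R'_j)|$.

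The main obstacle, and the step needing the most care, is the rejection case: one must argue from the precise semantics of \initroute\ that a single rejected request certifies a saturated straight-path bottleneck \emph{inside the same} SW-quadrant, and that the paths saturating it are genuinely charged to $A_s$ (they originate in $s$ and were accepted in this isolated run). The geometry of straight paths---each column's topmost vertical edge carrying all straight-up paths of that column, and symmetrically for rows---is what guarantees that the bottleneck lies within the quadrant and is fed only by requests of $s$. This local saturation fact, together with the two-outgoing-edges bound $F_s \leq 2$, is the crux of the argument.
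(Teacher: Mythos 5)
Your proposal is correct and follows essentially the same route as the paper's proof: bound the optimal fractional flow originating at each tile by $2$ using the unit capacities of the sketch edges and the fact that a far request must exit its first tile, and observe that whenever some request originates in a tile's SW-quadrant, \initroute\ accepts at least one of them (a rejection can only be caused by edges already saturated by accepted requests of that same quadrant). Your write-up merely makes explicit the per-tile summation and the saturation argument that the paper leaves implicit.
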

\begin{proof}
  A far request must exit the tile in which it begins.  The edge capacities in the
  sketch graph are unit. Hence, the amount of flow in $f^*_{S_j}(R'_j)$ that
  originates in each tile is at most $2$.  On the other hand, if a positive amount of
  flow originates in a tile $s$, then at least one request starts in the SW-quadrant
  of $s$. Hence $\initroute(R'_j)$ accepts at least one request that begins in $s$.
\end{proof}

A na\"{\i}ve analysis of the requests accepted by the
conjunction of the \IPP\ and \initroute\ algorithms implies
that the accepted requests are in the intersection, which
might be empty. However, in our algorithm the subsequence
of accepted requests is determined by both algorithms, and
this set of accepted requests determines the state of both
algorithms. Hence, by applying the combining analysis of
Kleinberg and Tardos~\cite{KT}, the combined competitive
ratio is shown to be the sum of the isolated competitive
ratios.

\begin{claim}\label{claim:combine}
  $f^*_{S_j}(R'_j) \leq \left( 2 + \frac{2}{0.31}\right) \cdot |\alg(R'_j)| $.
\end{claim}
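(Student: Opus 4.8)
The plan is to invoke the combination technique of Kleinberg and Tardos~\cite{KT} in a black-box fashion, using the two isolated competitive bounds already established in \claimref{claim:IPP} and \claimref{claim:init}. The essential structural observation—which the algorithm was explicitly designed to satisfy—is that \IPP\ and \initroute\ are run as \emph{admission-control} algorithms whose internal state depends only on the set $accepted_j$ of requests accepted by \emph{both} procedures. A far request $r_i$ is added to $accepted_j$ precisely when neither $sketch_i$ nor $init_i$ equals REJECT, and a request rejected by either one never alters the state of the other. This is exactly the hypothesis required by the combining framework, so the first step is to state this shared-state property carefully and cite the relevant lines of \Algref{alg:outline} that enforce it.

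Next I would recall the form of the Kleinberg--Tardos combining lemma. Abstractly, if one has two online admission-control algorithms $A_1$ and $A_2$ that are $\rho_1$- and $\rho_2$-competitive respectively when run in isolation, and one forms the combined algorithm that accepts a request iff both $A_1$ and $A_2$ would accept it (feeding each algorithm only the commonly-accepted requests), then the combined algorithm is $(\rho_1+\rho_2)$-competitive against the same optimum benchmark. The key accounting idea is a charging argument: an element of the fractional optimum over $S_j$ that is rejected by the combined algorithm must have been rejected by at least one of the two constituent algorithms, so the optimum mass is split into the mass lost to $A_1$ and the mass lost to $A_2$, and each piece is bounded by the corresponding isolated guarantee applied to the \emph{same} fractional benchmark $f^*_{S_j}(R'_j)$. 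Here $A_1=\IPP$ with $\rho_1 = 2/0.31$ and $A_2=\initroute$ with $\rho_2 = 2$, both measured against $|f^*_{S_j}(R'_j)|$, which is why the two isolated claims were phrased with the \emph{same} right-hand benchmark rather than against each algorithm's own optimum. Adding them yields the claimed $\left(2+\tfrac{2}{0.31}\right)$ factor.

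The step I expect to be the main obstacle is verifying that the isolated competitive guarantees of \claimref{claim:IPP} and \claimref{claim:init} remain valid when each algorithm is fed only the commonly-accepted subsequence rather than the full sequence $R'_j$. The combining lemma requires that each $A_i$, when restricted to the stream of requests it actually sees, still behaves as a genuine online run to which its isolated analysis applies; one must check that the adversarial freedom of the reduced stream does not exceed that of the original, and in particular that the fractional optimum $|f^*_{S_j}(R'_j)|$ dominates the optimum of any sub-stream so that it serves as a legitimate upper bound in both charging terms. For \initroute\ this is immediate since its argument in \claimref{claim:init} is purely a counting statement about tiles containing optimum flow. For \IPP\ one relies on the fact (\theoremref{thm:IPP}) that its throughput guarantee holds pointwise against the fractional packing of whatever request sequence it processes, and that omitting requests can only decrease the competing optimum.

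Concretely, I would structure the proof of \claimref{claim:combine} as a short application of the combining lemma: (i) identify $A_1,A_2$ and their isolated ratios; (ii) verify the shared-state hypothesis via the acceptance rule in \Algref{alg:outline}; (iii) perform the charging of lost optimum mass to the two constituents; (iv) bound each term by the corresponding isolated claim against $|f^*_{S_j}(R'_j)|$ and sum. No further calculation beyond adding $2$ and $2/0.31$ is needed once the framework is in place.
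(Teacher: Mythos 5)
Your plan follows the same route as the paper: both rest on the Kleinberg--Tardos combining technique applied to $A_1=\IPP$ (ratio $2/0.31$, from \claimref{claim:IPP}) and $A_2=\initroute$ (ratio $2$, from \claimref{claim:init}), and you correctly identify the hypothesis that makes the technique applicable, namely that the state of each constituent algorithm is determined by the set $accepted_j$ of requests accepted by \emph{both}. Where your write-up has a real gap is in the central charging step. You assert that the mass of $f^*$ lost to $A_\ell$ is ``bounded by the corresponding isolated guarantee applied to the same fractional benchmark $f^*_{S_j}(R'_j)$,'' and later that the main obstacle is whether the guarantee survives when $A_\ell$ ``is fed only the commonly-accepted subsequence.'' Neither formulation closes the argument: applying $A_\ell$'s isolated guarantee to $R'_j$ bounds $|f^*_{S_j}(R'_j)|$ by $\rho_\ell\cdot|A_\ell(R'_j)|$, where $A_\ell(R'_j)$ is the acceptance set of the \emph{isolated} run, which need not relate to $|\alg(R'_j)|$; and applying it to the input $X=\alg(R'_j)$ alone bounds only the optimum of $X$, which says nothing about the rejected mass.

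The missing idea is the construction of the right auxiliary instance for each $\ell$: the paper takes the fractional demand function $Z_\ell = X + f^*\cdot(1-A_\ell)$, i.e., the commonly accepted set $X$ together with the requests that $A_\ell$ rejected, the latter carrying demand equal to their $f^*$-flow. Two facts are then needed: (i) $f^*\cdot(X+1-A_\ell)$ is a feasible fractional packing for the demands $Z_\ell$, so $|f^*_{S_j}(Z_\ell)|\geq\|f^*\cdot(X+1-A_\ell)\|_1$; and (ii) $A_\ell$ run in isolation on $Z_\ell$ accepts \emph{exactly} $X$ --- this is where the shared-state property is actually used, via an induction showing that the state sequence of $A_\ell(Z_\ell)$ coincides with that of the combined run. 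Only then does the isolated ratio give $\|f^*\cdot(X+1-A_\ell)\|_1\leq CR_\ell\cdot|X|$, and summing over $\ell\in\{1,2\}$ with the pointwise inequality $1\leq(X+1-A_1)+(X+1-A_2)$ yields the claim. Your outline names all the right ingredients, but without the auxiliary instances $Z_\ell$ and the statement $A_\ell(Z_\ell)=X$, the two charging terms cannot be bounded by $\rho_\ell\cdot|\alg(R'_j)|$, and the proof as described would not go through.
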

\begin{proof}
  To simplify notation, let $A_1$ denote the \IPP\ algorithm in isolation, $A_2$
  denote the \initroute\ algorithm in isolation, and $\alg$ denote the combined
  algorithm. Let $I$ denote the input sequence $R'_j$.

  Consider the execution of $\alg$ with the input sequence $I$.  Let $X$ denote the
  subsequence of requests accepted by the combined algorithm $A$ on input $I$ (i.e.,
  $X=A(I)$).  Let $X_j$ denote the subsequence of requests accepted by algorithm
  $A_j$ during this same execution. Note that $X=X_1\cap X_2$.  Let us rewrite
  $f^*_{S_j}(I)$ as a function $f^*: I \rightarrow [0,1]$ in which $f^*(i)$ equals
  the amount of flow assigned to request $i\in I$ by $f^*_{S_j}(I)$. We abuse
  notation and view $X$ also as its characteristic function (i.e., $X(i)=1$ if and
  only if $i\in X$).

  Consider the packet routing version in which requests have demands in $[0,1]$. A
  demand of $1$ corresponds to the situation till now. A demand of zero means that
  the request does not appear (the input skips over this request). A fractional
  demand means that (i)~at most this fraction can be routed by the fractional
  routing, and (ii)~the request only occupies this fraction of the capacity of an
  edge. Fractional demand functions can be added and multiplied. If both
  functions are integral, then addition corresponds to the union, and
  multiplication corresponds to the intersection.

  Consider the demand function $Z_{\ell}\DEF X+f^*\cdot (1-A_{\ell})$, for
  $\ell\in\{1,2\}$. First, note that it attains values in $[0,1]$ because (i)~an
  accepted request $i\in X$ satisfies $A_{\ell}(i)=1$, and (ii)~a rejected request
  $i\notin X$ satisfies $X(i)=0$.  Note also that $f^*\cdot (X+1-A_{\ell})$ is a
  feasible fractional flow with respect to the demand $Z_{\ell}$, hence
   \begin{align}
     \label{eq:1}
     |f^*_{S_j}(Z_{\ell})| &\geq \|f^*\cdot (X+1-A_{\ell})\|_1.
   \end{align}

   We claim that the isolated algorithm $A_{\ell}$ on input $Z_{\ell}$ accepts
   exactly $X$.  To prove this consider the sequence of states of
   $A_{\ell}(Z_{\ell})$ and $A(X)$. Let $S^i_{\ell}$ denote the prefix of accepted
   requests by $A_{\ell}(Z_{\ell})$ till but not including the arrival of request
   $i$.  Similarly, let $X^i$ denote the prefix of accepted requests by $A(X)$ till
   but not including the arrival of request $i$. If $S^i_{\ell}=X^i$, for every $i$, then
   $A_{\ell}(Z_{\ell})=X$. Now we prove that $S^i_{\ell}=X^i$ by induction on $i$. Before
   the first request, both sets are empty. The induction step for $i\in X$ is easy
   because $A_{\ell}$ accepts $i$ when the state is $X^i$. On the other hand, if
   $i\notin X$ and $Z_{\ell}(i)>0$, then $A_{\ell}(i)=0$.

Let $CR_{\ell}$ denote the competitive ratio of $A_{\ell}$. Then
\begin{align}
  |f^*_{S_j} (Z_{\ell})| &\leq CR_{\ell} \cdot A_{\ell}(Z_{\ell}) = CR_{\ell} \cdot |X|
\label{eq:2}
\end{align}

By Equations~\ref{eq:1} and~\ref{eq:2}
\begin{align}
  \label{eq:3}
   \|f^*\cdot (X+1-A_{\ell})\|_1 &\leq CR_{\ell} \cdot |X|.
\end{align}
Observe that $1\leq (X+1-A_1) + (X+1-A_2)$. Indeed, if $i\in X$, then
$A_1(i)=A_2(i)=1$, so both sides equal $1$. If $i\notin X$, then $A_1(i)$ or $A_i(2)$
equals zero (perhaps both), and hence the right hand side is at least $1$.
Hence,
\begin{align*}
   \|f^*\|_1 &\leq  \|f^*\cdot (X+1-A_1)\|_1  +  \|f^*\cdot (X+1-A_2)\|_1 \\
&\leq CR_1 \cdot |X| + CR_2 \cdot |X|,
\end{align*}
and the claim follows.
\end{proof}

\subsection{Near Requests}
In this section we analyze the competitive ratio of the \routenear\ algorithm with
respect to near requests.  Recall that: (1)~A request is a near request if
the distance from the source to the destination is at most $\vl$. Note that
$\vl=\Theta(\frac{\log n}{B})$ and $B<\log (1+3\pmax)=O(\log n)$.  (2)~The
incoming requests are filtered so that at most $B'+c'$ requests originate in every
space-time vertex.

The following theorem states that \routenear\ succeeds in routing at least a
logarithmic fraction of the filtered near requests. This theorem implies that the
throughput is at least a logarithmic fraction of the optimal fractional routing of
the filtered near requests.
\begin{theorem} \label{thm:near}
$|\algn| \geq \Omega(\frac {1}{\log n}) \cdot |\near|$.
\end{theorem}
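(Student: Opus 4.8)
The plan is to analyze the \routenear\ algorithm directly, since it is a simple greedy procedure that uses only vertical edges and gives precedence to older requests. The key observation is that a near request travels a distance of at most $\vl$, and $\vl = \Theta(\frac{\log n}{B})$ rows. I would compare the throughput of \routenear\ against the number of near requests $|\near|$ after filtering. Recall that after filtering, at most $B'+c'$ requests originate at each space-time vertex. Since near paths are short (length at most $\vl$) and purely vertical, the only reason a near request $r_i$ is rejected is that its first outgoing vertical edge (from $(a_i,t)$ to $(a_i+1,t+1)$) is already saturated with $c'$ accepted requests.

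First I would set up a charging argument: each rejected near request is charged to the $c'$ accepted requests occupying its initial vertical edge, and I would bound how many rejections can be charged to a single accepted request. The crucial point is that an accepted near request occupies a contiguous vertical segment of at most $\vl$ edges in $G^{st}$. A rejected request is charged to an accepted request only if the rejected request's first edge coincides with some edge used by the accepted request. Since each accepted request spans at most $\vl$ vertical edges, and each such edge can be the ``first edge'' of rejected requests originating at its tail, I would count the total number of rejections charged to one accepted path. Combined with the filtering bound that at most $B'+c'$ requests share a source vertex, this should give that each accepted request absorbs at most $O(\vl \cdot (B'+c'))$ rejected requests.

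The main calculation is to verify that $\vl \cdot (B'+c') = O(\log n)$. Since $\vl = \lceil \frac{6k}{5B'}\rceil$ and $B' = \lfloor B/5\rfloor$, we have $\vl \cdot B' = \Theta(k) = \Theta(\log n)$, and similarly $\vl \cdot c'$ is controlled because $c' \le c \le \log n$ while $\vl = O(\frac{\log n}{B})$; here the bound $B < \log(1+3\pmax) = O(\log n)$ from the theorem's preamble is what keeps $\vl$ from collapsing to a useless value. Putting these together yields that the ratio $|\near| / |\algn|$ is $O(\log n)$, which is exactly the claimed bound $|\algn| \geq \Omega(\frac{1}{\log n}) \cdot |\near|$.

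The step I expect to be the main obstacle is making the charging argument airtight, specifically ensuring that no accepted request is overcharged. The subtlety is that \routenear\ only checks the \emph{first} outgoing vertical edge when accepting, and relies on the fact that all subsequent edges on the path lie strictly in the future and hence cannot already be saturated by \routenear. I would need to argue carefully that an accepted request, once admitted, indeed reaches its destination without conflict (so that it genuinely contributes to throughput), and that the rejections it is responsible for are exactly those whose source lies on one of its at most $\vl$ occupied edges with a saturated first edge. The bookkeeping to convert ``saturated first edge'' into a clean charge of at most $\vl \cdot (B'+c')$ rejections per accepted request — and to ensure disjointness of charges across accepted requests — is where the care is required.
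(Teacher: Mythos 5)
Your overall strategy is the same as the paper's: a double-counting (charging) argument between accepted and rejected near requests, using the facts that an accepted vertical path has length at most $\vl$, that filtering leaves at most $B'+c'$ requests per space-time vertex, and that a rejection means the first outgoing vertical edge already carries $c'$ accepted requests. The two degree bounds you describe --- each accepted request absorbs at most $O(\vl\cdot(B'+c'))$ rejections, and each rejection is witnessed by exactly $c'$ accepted requests --- are precisely the two sides of the paper's edge count in its bipartite conflict graph.

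However, the final calculation as you state it contains a genuine error. Since you charge each rejected request to all $c'$ accepted requests saturating its first edge, the total number of charge pairs is at least $c'\cdot|\near\setminus\algn|$ and at most $\vl(B'+c')\cdot|\algn|$; the ratio you must bound by $O(\log n)$ is therefore $\vl(B'+c')/c'$, not $\vl(B'+c')$. The quantity $\vl(B'+c')$ is in fact \emph{not} $O(\log n)$ in general: with $B=5$ and $c=\Theta(\log n)$ one has $B'=1$, $\vl=\Theta(\log n)$ and $c'=\Theta(\log n)$, so $\vl\cdot c'=\Theta(\log^2 n)$. Your attempted justification (``$\vl c'$ is controlled because $c'\le\log n$ while $\vl=O(\log n/B)$'') only yields $O(\log^2 n/B)$, which is $O(\log n)$ only when $B=\Omega(\log n)$, whereas the theorem must hold for constant $B$. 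Once you restore the division by $c'$ --- i.e., remember that each rejection's charge is split among $c'$ distinct accepted requests --- you obtain $\vl(B'+c')/c'=\Theta(k/c'+k/B')=O(\log n)$, which is exactly the paper's computation, and the argument goes through. The bookkeeping concerns you flag at the end (accepted requests reaching their destinations, disjointness of charges) are handled correctly by your observation that near paths only use future vertical edges; the real obstacle was the missing factor of $c'$.
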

\begin{proof}\sloppy
  It suffices to prove that $|\algn| \ge \Omega \left(\frac {1}{\log n}\right) \cdot |
  \near\setminus \algn|$.  Consider the following bipartite conflict graph.  Nodes on
  side $L$ are the requests of $\algn$, and nodes on side $R$ are the requests of
  $\near\setminus \algn$.  There is an edge $(r_i,r_j)\in L\times R$ if $r_j$ is
  rejected by the \routenear\ Algorithm and the vertical route of $r_i$ traverses the
  source vertex $(a_j,t_j)$ of $r_j$.  A request $r_i\in L$ conflicts with at most
  $B'+c'$ requests in each vertex. Hence, the degree of $r_i$ in the conflict graph is
  at most $(B'+c')\cdot \vl$.  On the other hand, the degree of $r_j\in R$ equals $c'$
  (where $c'$ is the capacity of the track reserved for the near requests).
  \medskip\noindent By counting edges on each side we conclude that
\begin{align*}
\sum_{r_i\in L} \deg(r_i) = \sum_{r_j\in R} \deg(r_j).
\end{align*}
Hence,
\begin{align*}
(B'+c')\cdot \vl \cdot |L| \geq c'\cdot |R|.
\end{align*}
We conclude that
\begin{align*}
|L| \geq \frac{c'}{(B'+c')\vl} \cdot |R|.
\end{align*}
As $\frac{(B'+c')\vl}{c'}=\Theta(\frac{\log n}{c'}+ \frac{\log n}{B'})=O(\log n)$,
and the theorem follows.
\end{proof}

\subsection{Putting Things Together}\label{sec:ptt}\label{sec:together}
In this section we prove~\lemmaref{lemma:frac cr}.  We partition the input sequence
$R'$ into $\near$ and $R'_j$, for $j\in \{1,2,3,4\}$  (recall that
$R'_j=R'\cap \far_j$). By subadditivity,
\begin{align}
  \label{eq:opt sub}
  |f^*_{\Gst} (R')|&\leq   |f^*_{\Gst} (\near)| + \sum_{j=1}^4 |f^*_{\Gst} (R'_j)|,\\
  |\alg(R')| &= |\alg(\near)| + \sum_{j=1}^4 |\alg(R'_j)|.
\end{align}
In order to bound the ratio $ |f^*_{\Gst} (R')|/ |\alg(R')|$, it suffices to
separately bound the ratios of the terms. Indeed, by~\theoremref{thm:near}
 $$|f^*_{\Gst} (\near)|\leq O(\log n) \cdot |\alg(\near)|.$$
By~\propref{prop:scale} and~\claimref{claim:combine},
\begin{align*}
  |f^*_{\Gst} (R'_j)| &\leq O(\log n) \cdot |\alg(R'_j)|.
\end{align*}
Finally, by~\propref{prop:R'}, $|f^*_{\Gst} (R)| \leq  9 \cdot |f^*_{\Gst} (R')|$.
Since $\alg(R)=\alg(R')$, the lemma follows.

\section{Extension to $d$-Dimensional Grids}\label{sec:d dim}

The following theorem is proved by extending~\Algref{alg:outline} for a
line network to $d$-dimensional grid.
\begin{theorem}\label{thm:d dim}
  For $B,c \in [2^{d+1}+1,\log n]$, there is a deterministic $2^{O(d)}\cdot \log^d
  n$-competitive online algorithm for the throughput maximization problem.
\end{theorem}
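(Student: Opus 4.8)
The plan is to follow the blueprint of the one-dimensional algorithm (\Algref{alg:outline}) and lift every component to the $d$-dimensional grid, paying a factor that is exponential in $d$ at each dimension and multiplicative across dimensions, so that the per-dimension $O(\log n)$ ratio compounds to $\log^d n$. First I would set up the space-time transformation for the $d$-dimensional grid: a request now specifies a source point and a destination, and the space-time graph $\Gst$ has one store edge and $d$ forward edges (one per coordinate direction) out of each vertex, all respecting the buffer bound $B$ and capacity $c$. The analogue of \lemmaref{lemma:nB} bounds path length by a polynomial in $n$ (the diameter grows by a factor $d$, which is absorbed into $\pmax$ and hence into $k$), so \theoremref{thm:IPP} still gives a $(2,k)$-competitive online path-packing algorithm with $k=O(\log n)$.

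The core change is the tiling. In $d$ dimensions a tile is a $d$-dimensional box, and there are $2^d$ ``orthant'' positions for the source within a tile (generalizing the four SW-quadrant tilings of \propref{prop:sw}); one would use $2^d$ offset tilings so that every space-time vertex lies in a distinguished orthant of exactly one tiling, and classify each far request into one of $2^d$ classes accordingly. Link multiplexing must now split each edge's capacity into enough tracks to separate the near class from all $2^d$ far classes, which is why the hypothesis tightens to $B,c\ge 2^{d+1}+1$: we need on the order of $2^{d+1}$ tracks so that each track retains capacity $\Omega(k)$ after the partition, exactly as in \propref{prop:tiling}. The sketch graph of each tiling is again a grid of tiles with unit-capacity edges, and \IPP\ runs on it to produce sketch paths; \initroute\ and \detailedroute\ are generalized using a $d$-dimensional analogue of crossbar routing (\propref{prop:crossbar}), which routes axis-parallel paths through the $2^d$ sub-boxes of a tile one coordinate at a time.

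The analysis mirrors \sectionref{sec:analysis}. Scaling (\propref{prop:scale}) and filtering (\propref{prop:R'}) each lose only a constant factor per dimension; the combining technique of Kleinberg and Tardos (\claimref{claim:combine}) now combines \IPP\ with the several sub-routing checks, so the competitive ratio is the \emph{sum} rather than the product of the constituents, contributing only a $2^{O(d)}$ factor. The essential multiplicative $\log^d n$ arises from iterating the $1$-dimensional detailed-routing guarantee across the $d$ coordinate directions: routing within each dimension incurs an $O(\log n)$ factor, and a request must be routed correctly in all $d$ directions, so the factors multiply. Combining these with the near-request bound (generalizing \theoremref{thm:near}, where a near request now has its destination within one tile in each coordinate) yields the claimed $2^{O(d)}\cdot\log^d n$ ratio.

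The main obstacle I anticipate is the $d$-dimensional detailed routing: proving a clean analogue of \propref{prop:crossbar} that routes traversing, initial, and final requests through the $2^d$ sub-boxes of a tile without congestion. In one dimension the argument rests on finding a ``vacant row'' for each right turn; in $d$ dimensions one must turn a path through up to $d$ axis directions inside a single tile, and the counting argument must simultaneously bound, for every one of the $2^d$ orthant faces, the number of paths that wish to cross it against that face's track capacity $\Omega(k)$. Getting these $2^d$ face-load bounds to hold simultaneously — so that the $d$-dimensional crossbar lemma can be invoked recursively, one coordinate at a time — is where the real work lies, and it is also what forces the $2^{d+1}+1$ lower bound on $B$ and $c$.
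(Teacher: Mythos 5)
Your high-level blueprint matches the paper's: lift the space-time transformation to a $(d+1)$-dimensional grid, tile it, multiplex link capacity into tracks, run \IPP\ on unit-ish-capacity sketch graphs, and combine via Kleinberg--Tardos. But there are two concrete gaps. First, your count of tilings is off: the offsets live in the \emph{space-time} grid, which has $d+1$ dimensions (the $d$ spatial axes plus the time/store axis), so there are two offsets per space-time dimension and hence $2^{d+1}$ tilings, not the $2^d$ ``orthants'' you describe. This is precisely what forces $2^{d+1}+1$ tracks (one per tiling plus one for \near) and explains the hypothesis $B,c\ge 2^{d+1}+1$; with only $2^d$ far classes your own accounting would give $2^d+1$ tracks and would not match the stated bound.

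Second, and more substantively, you misattribute the source of the $\log^d n$ factor. It does not come from ``iterating the detailed-routing guarantee across the $d$ coordinate directions'': detailed routing is lossless for accepted requests (the algorithm is nonpreemptive), so it contributes no $\log n$ factors at all. The $\log^d n$ arises from the $(d+1)$-dimensional analogue of \propref{prop:scale}: a face of a tile in $\Gst$ is a $d$-dimensional box whose total crossing capacity is $\Theta(\log n)^d$, while the corresponding sketch edge is given capacity $O(1)$ (in fact $\tfrac{1}{d+1}$, chosen so that \IPP\ routes only $O(\log n)$ paths out of each tile rather than $O(d\log n)$ across its $2(d+1)$ faces); scaling the fractional optimum down by this capacity ratio costs $O(\log n)^d$ in one step. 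Your version instead charges only ``a constant factor per dimension'' to scaling, which leaves the $\log^d n$ unexplained except by an appeal to multiplying per-dimension routing losses that do not occur. Finally, you flag the $d$-dimensional detailed routing as the open obstacle; the paper closes it not by a recursive crossbar decomposition but by an exchange argument --- whenever a packet cannot turn in the direction dictated by its sketch path, some other packet did make that turn, and since the number of paths leaving each tile is bounded by the face's track capacity, every packet eventually turns within its quadrant. Without that (or an equivalent) argument, your proof is incomplete at exactly the point you identify.
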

\begin{proof}[(sketch)]
  As in the one-dimensional case, perform a space-time transformation on the
  $d$-dimensional $n$-node grid $G$ to obtain the $(d+1)$-dimensional space-time grid
  $G^{st}$. Partition $G^{st}$ to $\ell_1\times\ldots \times\ell_{d+1}$ subgrids (or
  subcubes). The side length of a subgrid equals $\vl$ for directions that correspond
  to forward steps and $\Hl$ in the direction that corresponds to store steps.  There
  are two offsets per dimension, resulting with $2^{d+1}$ tilings. The number of
  tracks equals the number of offsets plus one (the extra one is for the near requests), hence we
  require that $B,c \geq 2^{d+1}+1$.  Similarly to the $1$-dimensional case, a
  request is classified as a near request if the distance from the source to the
  destination is at most $d\cdot \vl$.  Detailed routing within a tile is successful
  by the following observation. Every time a packet cannot turn to the direction that
  is dictated by its sketch path, there is a packet that did turn to its desired direction.
  Since the number of path emanating from each tile is bounded by the quadrant-side
  capacity, we conclude that every packet will eventually turn, if needed, within its
  quadrant, thus respecting its sketch path.

  Since the link capacity to track capacity ratio is $O(2^d)$, this scaling of
  capacities incurs an  $O(2^d)$ factor to the competitive ratio.  The
  sketch graph is obtained in the same way, with the exception that edge capacities
  are set to $\frac 1{d+1}$ (so that that the number of paths that \IPP\ routes out
  of a $(d+1)$-dimensional tile is at most $O(\log n)$). The ratio of edge capacities
  in $\Gst$ between adjacent faces of tiles and the capacity of the edge in the
  sketch graph is $O(\log n)^d$. This incurs an additional factor of $O(\log n)^d$
  for routing far requests due to capacity scaling.
The routing of near requests succeeds in routing at least a fraction of $d\cdot \log
n$ of the near requests. We conclude that the competitive ratio is determined by the
fasr requests, and hence the theorem follows.
\end{proof}

\section{Conclusion}
\label{sec:conc}
In this paper we presented an online deterministic packet routing algorithm.  For the
one dimensional grid (with constant-size buffers and constant-capacity links), this
algorithm closes the gap with the best throughput achieved by a randomized
algorithm.  This closes a problem which was open for more than a decade, but still
leaves open quite a few problems. The most urgent one is to reduce the gap between
the upper and lower bounds on the competitive ratio. Currently the best upper bound
is $O(\log n)$ for the line, and we are not aware of any no non-trivial lower bound.
We note that reducing the upper bound to $o(\log n)$ seems to require new techniques,
as the reduction to online path packing introduces a logarithmic factor in the
competitive ratio.

Another important question is to come up with reasonable distributed
algorithms. Even though, as mentioned above, the SDN model shifts many
network operation tasks to the centralized setting, it is very
interesting to find out what can be done without a central
coordinator.

\bibliographystyle{abbrv}
\bibliography{packet}
\appendix
\section{$\sqrt{\log n}$-competitiveness of initial routing}
\begin{lemma}\label{lemma:SW}
  Fix a tile $s$ and let $q$ denote its SW quadrant.
  Suppose that the sources of $m$  path requests are in $q$. Then
  $\Omega(\sqrt{m})$ path requests are served by the initial
  routing in $q$.
\end{lemma}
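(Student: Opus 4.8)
The plan is to model the SW quadrant $q$ as a rectangular subgrid in which $\initroute$ routes each request at a vertex $(i,j)$ along one of two straight paths: straight \emph{up} along forward (capacity $c'$) edges to the north side of $q$, or straight \emph{right} along store (capacity $B'$) edges to the east side of $q$. First I would establish the structural invariant that makes the counting tractable: since every vertical path in a column runs all the way to the top of $q$, the \emph{topmost} forward edge of a column is traversed by every vertical path through that column and is therefore its unique bottleneck. Hence a straight-up path from $(i,j)$ is available precisely when column $i$ currently carries fewer than $c'$ vertical paths, \emph{independently of the row $j$}; symmetrically for the straight-right path and row $j$. Because $\initroute$ never reroutes accepted requests (``incoming traffic continues uninterrupted''), saturation is monotone over time, so these availability conditions may be read off the final configuration.

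Next I would call a column \emph{full} if it carries $c'$ served vertical paths and a row \emph{full} if it carries $B'$ served horizontal paths. By the invariant, a request at $(i,j)$ is rejected by $\initroute$ iff at its arrival both column $i$ and row $j$ are full, and by monotonicity this then also holds in the final configuration. Let $s=s_v+s_h$ be the number of served requests (vertical plus horizontal), and let $n_v,n_h$ be the numbers of full columns and full rows. Each full column accounts for exactly $c'$ served vertical requests, so $n_v\le s_v/c'$, and likewise $n_h\le s_h/B'$. Every rejected request sits at one of the at most $n_v n_h$ vertices whose column and row are both full, and after the filtering of \defref{def:R'} at most $B'+c'$ requests share a source vertex, so the number of rejected requests is at most $(B'+c')\,n_v n_h$.

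Combining these bounds and applying the AM--GM inequality $s_v s_h\le (s/2)^2$ gives
\[
  m - s \;\le\; (B'+c')\,n_v n_h \;\le\; (B'+c')\frac{s_v s_h}{c'B'} \;=\; \Bigl(\tfrac1{B'}+\tfrac1{c'}\Bigr) s_v s_h \;\le\; 2\cdot\frac{s^2}{4}\;=\;\frac{s^2}{2},
\]
where the last step uses $B',c'\ge 1$. Thus $m\le s+ s^2/2 = O(s^2)$, and therefore $s=\Omega(\sqrt m)$, as claimed. The step I expect to be the main obstacle is the structural invariant of the first paragraph: one must verify carefully that under the greedy ``straight-paths-only, never reroute'' discipline the outermost edge of each line is always the bottleneck, so that availability reduces to a single per-column (resp.\ per-row) saturation count rather than depending on where within $q$ the source lies. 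Once that is pinned down, the rejection characterization and the double counting of rejection sites are routine.
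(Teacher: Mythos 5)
Your proof is correct and follows essentially the same row/column double-counting scheme as the paper: saturated rows and columns each contribute their full track capacity ($B'$ or $c'$) to the served count, rejections are confined to the at most $n_v n_h$ intersections where both lines are saturated (each holding at most $B'+c'$ filtered requests), and an AM--GM step --- which the paper carries out as a case analysis on $y/x$ versus $c'/B'$ --- closes the argument. Your explicit bottleneck invariant and the characterization of rejection as ``both the column and the row are full'' make the bookkeeping somewhat cleaner than the paper's w.l.o.g.\ restriction to well-populated rows and columns, but the underlying idea is the same.
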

\begin{proof}
%
We restrict attention to rows of $q$ which contain at least $B'$
  sources and  columns of $q$ which contain at least $c'$ sources of requests
  which cannot be
  routed horizontally. Since all other packets are trivially routed by the
  algorithm, we may assume w.l.o.g.\ that there are no other packets
  with sources in $q$.

{Let $y$ denote the number of rows that contain a source vertex of
  an initial request, and let $x$ denote the number of columns that
  contain a source vertex of an initial request that is not routed
  horizontally.}
  Clearly, $m\leq xy (B'+c')$. On the other hand, detailed routing in
  $q$ serves $yB'+xc'$ requests. We now prove that $yB'+xc' =
  \Omega(\sqrt{xy (B'+c')})$.

Without loss of generality, assume that
  $c'\geq B'$. Thus it suffices to prove that
\begin{align*}
  \sqrt{\frac yx}\cdot B'+\sqrt{\frac xy}\cdot c' = \Omega(\sqrt{c'}).
\end{align*}

We proceed with case analysis.  If $y\leq x$, then $\sqrt{\frac
  yx}\cdot B'+\sqrt{\frac xy}\cdot c'\geq c'$, as
  required.
Otherwise $y>x$.  We further distinguish between two cases:
  \begin{enumerate}
  \item If $y/x\geq c'/B'$, then $\sqrt{\frac yx}\cdot B'+\sqrt{\frac xy}\cdot c'\geq
    \sqrt{\frac{c'}{B'}}\cdot B' \geq \sqrt{c'}$, as required.
  \item If $x/y > B'/c'$, then $\sqrt{\frac yx}\cdot
      B'+\sqrt{\frac xy}\cdot c'\geq
      \sqrt{\frac{B'}{c'}}\cdot c' \geq \sqrt{c'}$, as
      required.
  \end{enumerate}
\end{proof}

Note that, if a single requested is input to a SW-quadrant,
then intial routing accepts it.
\end{document}